\newcommand{\s}{\sigma}
\renewcommand{\b}{\beta}
\renewcommand{\a}{\alpha}
\newcommand{\e}{\varepsilon}
\newcommand{\A}{\mathcal{A}}
\newcommand{\E}{\mathbb{E}}
\newcommand{\R}{\mathds{R}}
\renewcommand{\r}{\rho}
\newcommand{\g}{\gamma}
\renewcommand{\o}{\omega}
\newtheorem{theorem}{Theorem}[section]
\newtheorem{proposition}{Proposition}[section]
\newtheorem{remark}{Remark}[section]
\newtheorem{lemma}{Lemma}[section]
\newtheorem{corollary}{Corollary}[section]
\newtheorem{assumption}{Assumption}[section]
\theoremstyle{definition}
\newtheorem{definition}{Definition}[section]
\DeclareMathOperator{\essinf}{ess\,inf}
\title{Price impact and long-term profitability of energy storage\footnote{We thank Olivier Féron (Electricité de France) and Sébastien Soleille (BNP Paribas) for insightful comments on an earlier draft of this paper. This study was carried out in the framework of ``Energy for Climate'' interdisciplinary research center and was supported financially by the ``Decarbonize energy'' program of the Institut Polytechnique de Paris, as well as by the FIME Research Initiative of the Europlace Institute of Finance.}}
\author{Roxana Dumitrescu, Redouane Silvente and Peter Tankov\\CREST, ENSAE, Institut Polytechnique de Paris}
\date{}
\numberwithin{equation}{section}
\begin{document}

\maketitle


\begin{abstract}
We study the price impact of storage facilities in electricity markets and analyze the long-term profitability of these facilities in prospective scenarios of energy transition. To this end, we begin by characterizing the optimal operating strategy for a stylized storage system, assuming an arbitrary exogenous price process. Following this, we determine the equilibrium price in a market comprising storage systems (acting as price takers), renewable energy producers, and conventional producers with a defined supply function, all driven by an exogenous demand process.
The price process is characterized as a solution to a fully coupled system of forward-backward stochastic differential equations, for which we establish existence and uniqueness under appropriate assumptions. 
We finally illustrate the impact of storage on intraday electricity prices through numerical examples and show how the revenues of storage agents may evolve in prospective energy transition scenarios from RTE, the French energy electricity network operator, taking into account both the increasing penetration of renewable energies and the self-cannibalization effect of growing storage capacity. We find that both the average revenues and the interquantile ranges increase in all scenarios, highlighting higher expected profits and higher risk for storage assets. 
\end{abstract}

Key words: Energy storage, Electricity price, Price impact of storage, Cannibalization, Transition scenarios, Forward backward stochastic differential equation, Linear quadratic optimal control

MSC2020: 91A80, 91A06


\section{Introduction}
The transition to renewable energy production is a major challenge of the 21st century. According to the latest IPCC report \cite{IPCC}, even if no new fossil-fired power plants are constructed from now on, and the existing ones keep running until the end of their lifetime, the target of limiting global warming below $1.5^\circ$C will not be respected. Thus, conditions must be created for the massive replacement of fossil-fired power plants by renewable technologies. This is exacerbated by growing electricity demand:  although it fell slightly in 2020 owing to the COVID-19 pandemic, global electricity demand grew by 5\% in 2021 and by 2.5\%  in 2022, similar to the average annual growth rate of 2.6\% in the previous decade (2010-2021)\footnote{See Global Electricity Review 2021, 2022 and 2023, \href{https://ember-climate.org/}{ember-climate.org}.}. \textcolor{black}{In France, the French TSO RTE expects energy consumption to increase by $36\%$ by 2050, according to its Reference Scenario.}\footnote{https://rte-futursenergetiques2050.com/trajectoires/trajectoire-de-reference}



Given the intermittency of renewable energies, electricity storage devices play a crucial role in the energy transition. Multiple electricity storage technologies exist  (Pumped Hydroelectric Electricity Storage (PHES), battery storage with various battery systems, compressed air storage, hydrogen storage, supercapacitors and mechanical storage devices such as flywheels) and  can provide various services to the energy systems and users including energy arbitrage, frequency regulation, seasonal storage, peaker replacement, congestion management etc. We refer the reader to  \cite{schmidt2019projecting} for details of various storage technologies and services they can provide to energy systems, as well as their projected costs.

    The development of electricity storage in Western countries started in the second half of the last century with the build-up of PHES \cite{barbour_review_2016}. As of the present day, PHES account for approximately 86\%  of the global electricity storage infrastructure\footnote{See Renewables 2023 Global Status Report, \href{https://www.ren21.net/gsr-2023/}{ren21.net/gsr-2023/}}, with the remaining portion primarily consisting of battery-based systems. 
    \textcolor{black}{Modeling storage systems involves significant technical difficulties. From an engineering perspective, one must consider the discharge duration of the technology and the losses incurred during each charging cycle. For example, the discharge duration of pumped storage systems is relatively high (from several hours to several days), whereas Li-ion batteries allow for lower discharge times (less than an hour) \cite{elio2021review}. Additionally, some storage technologies  are affected by external factors, such as weather conditions for PHES (like rain or drought) and, to a lesser extent, temperature for batteries.}

Energy price arbitrage is the main source of revenues that storage agents can use to offset their investment costs. Nevertheless, the proliferation of storage units in the market may reduce price differences and available arbitrage opportunities, leading to lower than expected revenues and endagnering the profitability of units and, ultimately, the energy transition objectives through the so-called self-cannibalization effect. At the same time, increased penetration of intermittent renewable energy production may lead to higher volatility, higher price differences and, consequently, increased profits for energy storage. Therefore, the goal of this paper is to understand the impact of electricity storage devices on market prices of electricity in the presence of renewable generation. After studying the optimal behavior of a single price-taker storage agent in the intraday market, we analyze the impact of this optimal behavior on the price formation and derive the price impact of energy storage. 



We first build a model of a \textcolor{black}{price taker storage system} and determine its optimal strategy in the short term given an exogenous stochastic price process. We use the framework of stochastic optimal control to take into account weather-related random events, which affect the revenues of the storage agent, and which are modeled using Brownian and Poisson noise. This model can also consider a potential external energy source such as a renewable energy asset coupled with a storage device. We use a linear quadratic formulation for the problem solved by the storage agent to obtain a tractable equilibrium price model. This formulation leads to an explicit strategy in a closed-loop form that depends on the current electricity price and on the conditional expectation of future prices.


Next, we build a stylized equilibrium model of the electricity market comprising storage systems (acting as price takers), renewable energy producers, and conventional producers with a defined supply function, all driven by an exogenous demand process. In a toy model with no random sources in the market, we find an explicit solution for the electricity price at all times. In the general case the price process is characterized as the solution of a fully coupled system of forward-backward stochastic differential equations, for which we establish existence and uniqueness  under appropriate assumptions.


{\color{black}Finally, we study both short-term impact of storage on the market prices of electricity and its long-term impact on agents' revenues through numerical examples, calibrated to French intraday electricity market data and the prospective scenarios for the French electricity sector. 
In the short term, we find that an increase in storage capacity within the market leads to smaller differences between peak and off-peak prices and lower overall price volatility.  In the long term, we find that, ceteris paribus, increasing penetration of storage leads to a reduction of revenues of individual storage agents, because storage reduces price differences and, consequently, arbitrage opportunities. However, increasing penetration of intermittent renewable energies leads to higher price volatility, higher price differences and potentially higher revenues for storage agents. To understand whether the gains for storage agents related to increased renewable penetration will compensate the losses due to their price impact in the long term, we consider the reference scenarios for the evolution of the French electricity sector published by RTE, the French electricity network operator\footnote{See \href{https://rte-futursenergetiques2050.com/}{rte-futursenergetiques2050.com}}. Taking the renewable penetration projections and storage development projections from these scenarios, we find, under natural assumptions, that the expected revenues of a typical storage agent may grow by $150\%-400\%$ between 2020 and 2050, depending on the scenario, due to increased renewable penetration, even accounting for the losses due to self-cannibalization effect. However, the interquantile ranges of revenues increase in all scenarios, highlighting higher risk for storage agents. This higher risk is also apparent in the growth of price volatility, which increases, on average, by 120\% to 230\% depending on the scenario, despite the stabilizing effect of storage. 

 The paper is structured as follows. In the rest of this section we review the relevant literature and present the common notation for the rest of the paper. Section \ref{control.sec} presents the optimal control problem of a storage agent. The existence and uniqueness of the equilibrium price process are discussed in Section \ref{price.sec}. In Section \ref{shortterm.sec}
 we illustrate the short-term impact of storage on market prices and in Section \ref{longterm.sec} the long-term impact of storage on agents' revenues. Finally, in Appendix, we collect several proofs of technical results.}

\paragraph{Literature review}


Electricity storage has been widely studied with either a specific focus on the storage itself (see \cite{cruise2014optimal,cruise2018optimal,khalilisenobari2022optimal} or \cite{zhao_optimal_2009,steffen2016optimal} for an application to PHES), or with storage coupled with a production asset such as wind power plant \cite{collet2018optimal,korpaas2003operation,sousa2014impact} or oil power plant \cite{achdou2022class}. While in most papers the storage device is considered to be a price taker, with no direct impact on electricity prices,  some authors assume that the amount of storage is sufficient to actively impact electricity prices and consider the storage agent as a price maker \cite{ding2017optimal,sousa2014impact}. 

Typically, the primary objective is to determine the optimal operation strategy for a storage facility accounting for electricity price dynamics. This problem has been tackled with Lagrangian and non-convex optimisation techniques \cite{cruise2018optimal,khalilisenobari2022optimal}, bi-level linear programming \cite{cui2017bilevel,khalilisenobari2022optimal},  or  stochastic control (e.g., in \cite{kordonis2023optimal} using a value-iteration-type algorithm or in \cite{carmona_valuation_2010, shardin2017partially} with HJB equations). 
For an up-to-date overview of optimal control approaches to storage management, we refer the reader to \cite{machlev_review_2020}. The stochastic control approach makes it possible to account for random factors impacting storage devices, such as a random electricity price or random losses linked to external phenomena. However, with this approach it becomes more complex to take into account technical features of the storage system and to find a tractable solution. In this paper, we use linear quadratic stochastic control, which allows us both to take into account random factors in a rigorous manner and to derive an explicit solution for the optimal strategy of a storage system, so as to determine the equilibrium price dynamics later on.

Price formation in the electricity market has been studied by many authors over the past years, either with ad hoc market impact models or with equilibrium models. In the latter approach,  from a mathematical standpoint,  the price is a direct outcome of the equilibrium between producers competing to satisfy energy demand by offering their supply. Price formation in the intraday market, which will be the focus of this paper, has been studied in \cite{olivier2022price,aid2022equilibrium, aid2016optimal}. General dynamic stochastic equilibrium models often give rise to systems of Forward-Backward Stochastic Differential Equations (FBSDE) (see \cite{ma1999forward} for a general introduction, \cite{aid2022equilibrium} for an application to intraday market, \cite{shrivats2022mean} for an applcation to the certificate market with a McKean-Vlasov FBSDE, \cite{fujii2022mean,fujii2022equilibrium} for applications to equilibrium modeling in a more general setting), which will be the key tool in the proof of our main theorem. Another approach to price formation models is based on mean-field games. For instance \cite{aid_entry_2021} investigates an MFG framework involving both renewable and conventional producers (also refer to \cite{fujii2022equilibrium} and \cite{fujii2022mean} for a more general case not specifically applied to the electricity market).

The impact of energy storage on electricity prices has been explored  in various scenarios, including at the end-consumer level \cite{ahlert2010assessing} and at the wholesale level \cite{nyamdash2013impact} and it  has been recognized that energy storage has an impact on smoothing electricity prices \cite{gast2013impact} and can reduce price volatility \cite{hu2022impact,yang2016reducing}. Nevertheless these papers rarely address price formation in detail and mostly use simulation methods, engineering approaches or deterministic settings \cite{awad2014impact}. By contrast, our framework allows to prove the existence and uniqueness of the price process resulting from the equilibrium between producers in the market after computing the optimal strategy of a storage agent. 

{

\paragraph{Notations}

We fix a time horizon $T>0$. Define $\R^\star :=  \R \backslash \{0\} $, which we equip with a $\sigma$-finite measure $\nu$.
 Let $(\Omega, \mathcal{F}, \mathbb{P})$ be a probability space equipped with a right-continuous filtration $\mathbb{F}:=\{ 
 \mathcal{F}_t,\,\, t \in [0,T]\}$. Let $W$ be a one-dimensional $\mathbb{F}$-Brownian motion and let $N(dt,de)$ be an $\mathbb{F}$-Poisson random measure with compensator $dt \otimes \nu(de)$, supposed to be independent from $W$. We denote by $\Tilde{N}(dt,de)$ the compensated measure, i.e. $\Tilde{N}(dt,de):=N(dt,de)-dt\nu(de)$.
 
 We denote by $\mathcal{P}$ the predictable $\sigma$-algebra on $[0,T]\times \Omega$. We also use the following notation:
\begin{itemize}
\item[$\bullet$] $\mathcal{H}_d^2$ is the set of $\mathbb{R}^d$-valued predictable processes $\varphi$ such that  $\E[\int_0^T|\varphi_s|^2ds]<\infty$ (in the case when $d=1$, we will omit the subscript);
\item[$\bullet$] $\mathcal{H}^2_{\nu,d}$ is the set of $\mathbb{R}^d$-valued processes $l:(\omega,t,e) \in (\Omega \times [0,T] \times \textcolor{black}{\R^{\star}}) \mapsto l_t(\omega,e)$ which are predictable, i.e. $(\mathcal{P} \otimes \mathcal{B}(\mathbb{R}^\star),\mathcal{B}(\mathbb{R}))$-measurable, and such that $$\E \int_0^T \int_{\R^{\star}} |l(.,t,e)|^2 \nu(de)dt<\infty $$ (in the case when $d=1$, we will omit the subscript) 
\item $\mathcal{S}^2$ is the set of $\mathbb{R}$-valued adapted and RCLL processes $\phi$ such that    $\E[\sup_{0\leq t\leq T} |\phi_t|^2]<\infty$.
\item Let $\mathcal{M}^2$ be the set of square integrable martingales $M=(M_t)_{t \in [0,T]}$ with $M_0=0$. This is a Hilbert space equipped with the scalar product $(M,M')_{\mathcal{M}^2}:=\mathbb{E}[M_T M'_T]$, for $M,M' \in \mathcal{M}^2$. For each $M \in \mathcal{M}^2$, we set $\|M\|^2_{\mathcal{M}^2}:=\mathbb{E}[M_T^2]$.

\item Let $\mathcal{M}^{2,\perp}$ be the subspace of martingales of $H\in \mathcal{M}^2$ satisfying $\langle H,W \rangle_\cdot =0$ and such that for all predictable processes $l \in \mathcal{H}^2_\nu$, it holds that
$$\langle H,\int_0^\cdot \int_{\mathbb{R}^\star} l_s(e) \Tilde{N}(ds,de) \rangle_t=0, \,\, 0 \leq t \leq T,\,\, \text{a.s.}$$
\item $L^2(\mathcal{G})$ is the set of $\mathbb{R}$-valued $\mathcal{G}$-measurable random variables, with $\mathcal{G} \subset \mathcal{F}$.
\end{itemize}

\section{Optimal control of a storage} \label{control.sec}
 We consider an \textcolor{black}{electricity storage system} whose state process, denoted by $(Q^q_t)_t$, admits the following controlled dynamics:
    \begin{equation} \label{state}
        dQ^q_t=(-q_t+\kappa_t)dt+ \r_t dW_t+ \int_{\mathbb{R}^{\star}} \Xi(t,e)\widetilde N(dt,de), \,\, Q^q_0=\overline{Q}_0.
    \end{equation}
with $\overline{Q}_0>0$ a reference storage level. Here, $(q_t)$ plays the role of the control process and represents the withdrawal rate, while the real-valued processes $(\rho_t)_{0\leq t\leq T} \in \mathcal{H}^2$ and $\Xi \in \mathcal{H}_{\nu}^2$ describe the random amount of energy lost or gained because of external factors, either with small variations (Brownian noise) or through jumps (compensated Poisson random measure). For example, these factors can represent meteorological events that impact the tank level of a PHES, such as more or less intense rain, drought, etc. We also introduce a stochastic process $\kappa_t \in  \mathcal{H}^2$, which corresponds to a potential uncontrolled external source of energy supplying the storage, such as a wind power plant.

We assume that \textcolor{black}{the storage facility} operates exclusively in the intraday electricity market and that the main goal of the owner is to benefit from price arbitrages. In European intraday markets, all delivery periods are traded from the opening time of the intraday market to shortly before delivery. However, as shown in \cite{olivier2022price},  liquidity is only available during the last trading hours of each product. Therefore, we assume that the owner of the storage facility trades only shortly before delivery, at the last intraday price. We denote this price process by $(P_t)_{0\leq t\leq T}$ and refer to it as the real-time price.
 Throughout this study, each individual storage agent is supposed to be small enough to be considered as a price taker.

We formulate the optimization problem of the storage agent as a linear-quadratic optimal control problem with finite time horizon, which reads as follows:
\begin{equation} \label{Optimal}
     \underset{q \in \mathcal{A}}{\inf} \mathbb{E}\left[ \int_0^{T} \left\{-P_sq_s + \frac{\alpha}{2} q_s^2+\frac{\beta}{2}\left(Q^q_s-\overline{Q}_0\right)^2 \right\}ds + \frac{\gamma}{2}\left(Q^q_T-\overline{Q}_0\right)^2\right].
\end{equation}
Here, the first component, $-P_s q_s$, corresponds to the financial gain/loss of the storage agent from market transactions, while the second component, $\alpha q_t^2$ , represents a loss that occurs during each energy transfer, in other words, the energy discharge cost. While in reality the loss depends on the prevailing electricity price, this simplified model enables us to obtain an explicit expression for the solution of this optimal control problem.  

In our model, the state $(Q_t^q)$ is not explicitly constrained. Therefore, the purpose of the third term, $\frac{\beta}{2}(Q^q_s-\overline{Q}_0)^2$, is to financially penalize the agent for deviating too far from the initial reference level, in other words, this term represents the ``rental'' cost for a given capacity. Additionally, in the last term we penalize the final value of the storage level to ensure that the agent does not try to benefit from simply selling the energy just before the terminal time. 


For a single battery or a PHES facility, the quadratic objective functional does not a priori seem realistic since the capacity of these assets is bounded. However, for some storage systems, such convex cost structures may arise. Consider for example, ``indirect'' energy storage in the form of a demand response system \cite{newbery2018shifting}. The cost of shifting demand (per unit of energy) is likely to be increasing as function of quantity to be shifted since one needs to call upon new types of consumers, for whom delaying consumption may be more expensive. As a second example, consider indirect storage in the form of a ``vehicle to grid'' system. Different electric vehicle owners may ask for different prices to allow their vehicle to be used for energy storage; thus for storing more energy one needs to call upon vehicles with higher per-unit storage rates. We conclude that our stylized linear-quadratic storage model may represent some types of storage systems, and hence can be used to understand, at least qualitatively, the impact of storage on price formation. 

\begin{remark}[Representative storage agent] \normalfont\label{remark_param_agent}
To understand how to choose the parameter values for a ``representative'' storage agent in this model, recall that 
the term $\frac{\beta}{2}\left(Q-\overline Q_0\right)^2 $ models the cost, per hour, of ``renting'' $Q$ \text{MWh} of electricity from the storage system, while the term $\alpha q^2$ models the additional cost one has to pay for discharging at rate $q$. Since we aim to model a storage system operating in daily cycles, we consider a stylized daily cycle shown in Figure \ref{soc.fig}, and choose the parameters $\alpha$ and $\beta$ so that each component of the costs corresponds to $50\%$ of the typical levelized cost of storage of 1\text{MWh} of electricity in current energy markets \cite{schmidt2019projecting}.

\begin{figure}
\centerline{\includegraphics[width=0.6\textwidth]{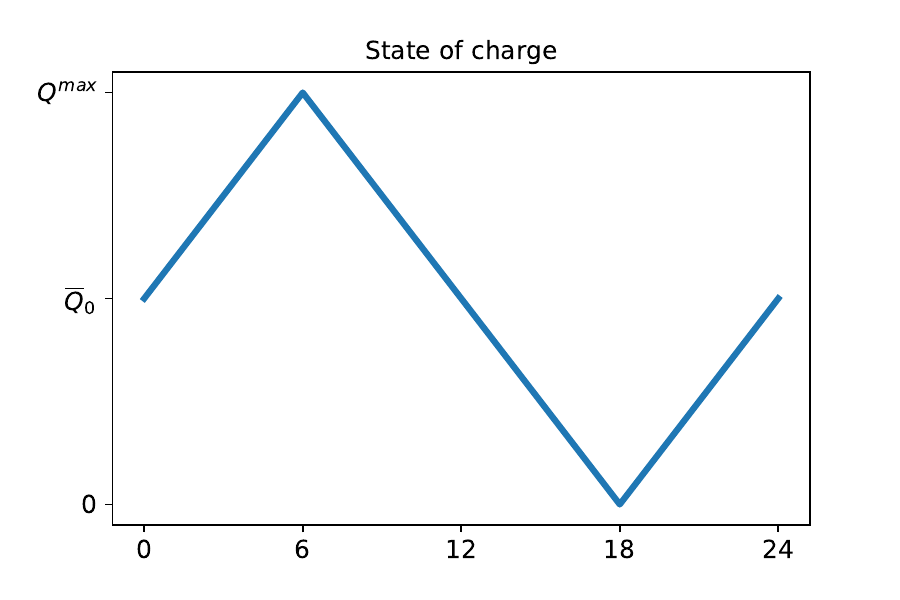}}
\caption{Stylized daily cycle for a storage facility.}
\label{soc.fig}
\end{figure}

Letting $Q^{max} = 1$ \text{MWh} and $\overline Q_0 = 0.5$ \text{MWh}, we easily find that the rental cost of our 1\text{MWh} storage system over 24 hours equals exactly $\beta$ euros, while the discharge cost equals $\frac{\alpha}{12}$ euros.
With a typical round-trip efficiency of $20\%$ and considering a typical intraday price differential of around $70$ \euro{}/MWh, we find a levelized cost of around $14$ \euro{}. This yields $\beta = 7$ and $\alpha = 84$. The choice of $\gamma$ is less important since it is a penalty parameter to enforce the terminal constraint; it should be taken large enough to make sure that $Q_T$ is close to $\overline Q_0$. In the sequel, to describe a ``representative'' storage agent with 1MWh capacity, we therefore take $\beta = 7$, $\alpha = 84$ and $\gamma = 500$. 
\end{remark}



The following theorem determines the optimal strategy of the storage owner.

\begin{theorem}\label{main}Assume that the price process $(P_t)$  belongs to $\mathcal{H}^2$.
    Define the following auxiliary functions :

    \begin{equation}
        f(t,T):=\frac{1-ue^{-2\omega\left(T-t\right)}}{1+ue^{-2\omega\left(T-t\right)}}.
\end{equation}
\begin{equation}
    f_1(t,s,T):=\omega\frac{e^{-\omega(s-t)}-ue^{-\omega(2T-s-t)}}{1+ue^{-2\omega(T-t)}},
    \end{equation}
    with $u:=\frac{\sqrt{\a \b}-\g}{\sqrt{\a \b}+\g}$ and $\omega = \sqrt{\frac{\beta}{\alpha}}$.

    There exists an unique  optimal control for the control problem \eqref{Optimal}. Furthermore, the optimal strategy of a storage agent with state equation (\ref{state}) solving the optimal control problem (\ref{Optimal}) admits the following closed-loop representation : 
    \begin{equation} \label{optimalsol}
    q^\star_t=\omega f(t,T)\left(Q^{q^\star}_t-\overline{Q}_0\right) - \E\left[ \int_t^T f_1(t,s,T)\left(\frac{P_s}{2\a}-\kappa_s\right)ds \Big| \mathcal{F}_t\right]  + \frac{P_t}{2\a}
\end{equation}
where $(Q^{q^\star}_t)$ is the solution of 
\begin{equation*}
    dQ^{q^\star}_t=(-q_t^\star+\kappa_t)dt + \rho_t dW_t + \int_{\mathbb{R}^{\star}} \Xi(t,e)\widetilde N(dt,de), \,\, Q^{ q^\star}_0= \overline{Q}_0.
\end{equation*}

\end{theorem}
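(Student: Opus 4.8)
The plan is to exploit the strict convexity of \eqref{Optimal} and apply the stochastic maximum principle for jump--diffusions, which for a convex problem yields conditions that are simultaneously necessary and sufficient. Setting $X_t := Q^q_t - \overline Q_0$, so that $dX_t = (-q_t+\k_t)\,dt + \r_t\,dW_t + \int_{\R^\star}\Xi(t,e)\widetilde N(dt,de)$ with $X_0 = 0$, I would introduce the Hamiltonian
\[
H(t,x,q,p,z,l) = (-q+\k_t)p + \r_t z + \int_{\R^\star}\Xi(t,e)\,l(e)\,\nu(de) - P_t q + \frac{\a}{2}q^2 + \frac{\b}{2}x^2,
\]
together with the adjoint equation $-dp_t = \b X_t\,dt - z_t\,dW_t - \int_{\R^\star} l_t(e)\widetilde N(dt,de)$, $p_T = \g X_T$. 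Because the Brownian and jump coefficients of the state are exogenous and control-independent, $\partial_x H = \b x$, and the first-order condition $\partial_q H = 0$ gives the feedback relation $q^\star_t = (p_t + P_t)/\a$. Convexity then guarantees that this stationary point is the unique minimizer, reducing the theorem to solving the fully coupled linear FBSDE for $(X,p)$ and verifying square-integrability of the candidate.

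To decouple this FBSDE I would use the affine ansatz $p_t = A_t X_t + Y_t$ with $A$ deterministic and $Y$ adapted, $Y_T = 0$. Inserting $q^\star_t = (A_t X_t + Y_t + P_t)/\a$ into the adjoint dynamics and matching the coefficients of $X_t$ forces $A$ to satisfy the scalar Riccati equation $\dot A_t = A_t^2/\a - \b$ with $A_T = \g$. Its stationary points are $\pm\sqrt{\a\b}$; separating variables yields $A_t = \sqrt{\a\b}\,f(t,T)$, and one verifies that $f(T,T) = \g/\sqrt{\a\b}$ reproduces $A_T = \g$, while $\g,\sqrt{\a\b}>0$ forces $u\in(-1,1)$, keeping $1+u e^{-2\o(T-t)}$ bounded away from zero so that $A$ stays finite on $[0,T]$. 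The residual (non-$X_t$) terms leave a linear BSDE for $Y$, of the form $dY_t = \big(\tfrac{A_t}{\a}Y_t + A_t(\tfrac{P_t}{\a}-\k_t)\big)dt + (\text{martingale increments})$, $Y_T = 0$.

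This linear BSDE is solved by the integrating-factor method: multiplying by $\exp(-\int_0^t A_u/\a\,du)$, integrating from $t$ to $T$ and conditioning on $\mathcal F_t$ to annihilate the martingale part gives $Y_t = -\,\E\big[\int_t^T e^{-\int_t^s A_u/\a\,du}\,A_s\,(\tfrac{P_s}{\a}-\k_s)\,ds \,\big|\, \mathcal F_t\big]$. The decisive computation is the evaluation of the integrating factor: with $A_u/\a = \o f(u,T)$ and the primitive $\int_t^s f(u,T)\,du = (s-t) + \tfrac1\o\log\frac{1+u e^{-2\o(T-t)}}{1+u e^{-2\o(T-s)}}$, the kernel $\o f(s,T)\exp(-\o\int_t^s f(u,T)\,du)$ collapses exactly to $f_1(t,s,T)$. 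Substituting $A_t/\a = \o f(t,T)$ and this expression for $Y_t$ into $q^\star_t = (A_t X_t + Y_t + P_t)/\a$ then produces the closed-loop representation of the form \eqref{optimalsol}.

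The main obstacle is the explicit resolution of this decoupling system rather than any abstract difficulty, since convexity has already reduced optimality to a linear FBSDE. Concretely, the crux is to solve the Riccati equation in closed form and recognize it as $\sqrt{\a\b}\,f(t,T)$ without singularities, and then to identify the integrating factor of the $Y$-BSDE with the kernel $f_1(t,s,T)$ via the primitive above; careful bookkeeping of the multiplicative constants (in particular the discharge-cost normalization controlling the prefactor of the price terms) is needed to match \eqref{optimalsol} exactly. The remaining loose ends are routine: admissibility $q^\star \in \A$ follows from $P,\k \in \mathcal H^2$, the boundedness of $f$ and $f_1$ on $[0,T]$, and standard $L^2$ estimates for the linear SDE satisfied by $Q^{q^\star}$; and a brief completion-of-squares/verification step confirms optimality, which is in any case automatic from the global convexity of the cost.
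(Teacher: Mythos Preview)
Your proposal is correct and follows essentially the same route as the paper: convexity for existence/uniqueness, an FBSDE characterization of the optimum, and decoupling via the affine ansatz $p_t = A_t X_t + Y_t$ into a Riccati ODE plus a linear BSDE whose integrating factor produces exactly the kernel $f_1$. The only cosmetic differences are that the paper first shifts the control ($a_t := q_t - P_t/(2\alpha)$) before deriving the adjoint system via a direct G\^ateaux-derivative argument (its Lemma~\ref{lemma}) rather than citing the maximum principle, and it carries an explicit orthogonal-martingale term $H$ in the BSDEs to accommodate the general filtration $\mathbb F$.
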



Theorem \ref{main} with an exogenous price process is a rather standard result in the literature, proofs of its variants can be found in several papers.
Our treatment largely follows \cite{kohlmann_global_2002}, but since this reference does not allow for a general filtration, we provide a proof in Appendix \ref{proofmain.sec} for completeness.

We close this section with the following corollary, which shows that identical storage agents can be aggregated into a representative agent. The assumption of independence of $W^1,\dots,W^p$ is imposed to simplify notation and can be easily relaxed to allow for correlation between individual Brownian motions. 
\begin{corollary}\label{representative}Let $W^1,\dots,W^p$ be independent $\mathbb F$-Brownian motions, and consider $p$ storage agents without external supply and without jump terms in their dynamics, that is,
$$
dQ^{i,q^i}_t = -q^i_t dt + \rho^i_t dW^i_t,\quad Q^{i,q^i}_0 = \overline Q^i_0. 
$$
Assume that these agents have identical parameters (i.e., $\alpha_i=\alpha$, $\beta_i=\beta$ and $\gamma_i=\gamma$ for all $i=1,\ldots,p$).  Then, the optimal injection or withdrawal of this group of agents is equal to the optimal injection / withdrawal of a single agent with parameters $(\alpha/p,\beta/p,\gamma/p,\bar \rho)$ with $\bar \rho_t = \sqrt{\sum_{i=1}^p (\rho^i_t)^2}$ and initial storage level $\overline Q_0 = \sum_{i=1}^p \overline Q^i_0$. 
\end{corollary}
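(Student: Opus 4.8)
The plan is to apply the explicit closed-loop formula \eqref{optimalsol} of Theorem \ref{main} to each of the $p$ agents, sum the resulting optimal controls, and recognise the sum as the closed-loop strategy of a single agent with the claimed parameters; the uniqueness part of Theorem \ref{main} then upgrades this identification to a genuine equality of optimal controls. Since each agent has no external source ($\kappa^i\equiv 0$) and no jump term, agent $i$'s optimal withdrawal rate is
\begin{equation*}
q^{i,\star}_t = \omega f(t,T)\bigl(Q^{i,q^{i,\star}}_t - \overline{Q}^i_0\bigr) - \E\Bigl[\int_t^T f_1(t,s,T)\frac{P_s}{2\alpha}\,ds \,\Big|\, \mathcal{F}_t\Bigr] + \frac{P_t}{2\alpha}.
\end{equation*}

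First I would record the scaling invariance of the auxiliary quantities. Writing $\tilde\alpha=\alpha/p$, $\tilde\beta=\beta/p$, $\tilde\gamma=\gamma/p$ for the aggregate agent, one checks directly that $\tilde\omega=\sqrt{\tilde\beta/\tilde\alpha}=\omega$ and $\tilde u=\frac{\sqrt{\tilde\alpha\tilde\beta}-\tilde\gamma}{\sqrt{\tilde\alpha\tilde\beta}+\tilde\gamma}=\frac{\sqrt{\alpha\beta}-\gamma}{\sqrt{\alpha\beta}+\gamma}=u$, because both $\omega$ and $u$ are homogeneous of degree $0$ under the simultaneous rescaling of $(\alpha,\beta,\gamma)$. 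Consequently the functions $f(t,T)$ and $f_1(t,s,T)$, which depend on the parameters only through $\omega$ and $u$, are identical for the aggregate agent and for each individual agent. The only term that changes is the price term, which scales as $\frac{P}{2\tilde\alpha}=p\,\frac{P}{2\alpha}$.

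Next I would sum the $p$ feedback laws. Setting $\tilde Q_t:=\sum_{i=1}^p Q^{i,q^{i,\star}}_t$ and using $\sum_i \overline{Q}^i_0=\overline{Q}_0$, the first term sums to $\omega f(t,T)(\tilde Q_t-\overline{Q}_0)$, while the remaining two terms sum to $-\E[\int_t^T f_1(t,s,T)\frac{pP_s}{2\alpha}ds\mid\mathcal{F}_t]+\frac{pP_t}{2\alpha}$, which by the previous paragraph equals $-\E[\int_t^T f_1(t,s,T)\frac{P_s}{2\tilde\alpha}ds\mid\mathcal{F}_t]+\frac{P_t}{2\tilde\alpha}$. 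Hence $\sum_i q^{i,\star}$ is exactly the right-hand side of \eqref{optimalsol} written for parameters $(\tilde\alpha,\tilde\beta,\tilde\gamma)$ and state $\tilde Q$. It remains to check that $\tilde Q$ solves the corresponding aggregate state equation: summing the individual dynamics gives $d\tilde Q_t=-\bigl(\sum_i q^{i,\star}_t\bigr)dt+\sum_{i=1}^p \rho^i_t\,dW^i_t$, whose drift is the aggregate control and whose martingale part $\sum_i\int_0^\cdot \rho^i_s\,dW^i_s$ has quadratic variation $\int_0^\cdot\sum_i(\rho^i_s)^2\,ds=\int_0^\cdot\bar\rho_s^2\,ds$.

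The one genuinely non-bookkeeping point, and the main (if mild) obstacle, is this last noise reduction: one must argue that driving the aggregate state by $\sum_i\rho^i\,dW^i$ yields the same optimal control as driving it by $\bar\rho\,d\bar W$ for a single Brownian motion $\bar W$. This rests on the linear-quadratic feature, already visible in \eqref{optimalsol}, that the optimal feedback law does not depend on the diffusion coefficient at all (the volatility enters only the constant term of the value function and therefore drops out of the control); accordingly only the quadratic variation of the driving martingale is relevant. Since $\sum_i\int_0^\cdot\rho^i_s\,dW^i_s$ is a continuous martingale with quadratic variation $\int_0^\cdot\bar\rho_s^2\,ds$, it can be realised as $\int_0^\cdot\bar\rho_s\,d\bar W_s$ for a suitable Brownian motion $\bar W$ (by L\'evy's characterisation after a time change, or directly via martingale representation), and $\bar\rho\in\mathcal{H}^2$ because each $\rho^i\in\mathcal{H}^2$. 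Finally, invoking the uniqueness of the optimal control asserted in Theorem \ref{main} for the aggregate agent identifies $\sum_i q^{i,\star}$ as that unique optimum, which completes the proof.
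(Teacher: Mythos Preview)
Your proof is correct and follows essentially the same route as the paper: apply Theorem~\ref{main} to each agent, sum the closed-loop formulas, use the degree-zero homogeneity of $\omega$, $u$, $f$, $f_1$ under $(\alpha,\beta,\gamma)\mapsto(\alpha/p,\beta/p,\gamma/p)$, and reduce the sum $\sum_i\rho^i\,dW^i$ to $\bar\rho\,d\bar W$ via L\'evy's characterisation. The only differences are expository---you spell out the scaling invariance and the irrelevance of the diffusion coefficient for the feedback law more explicitly, and you close with an appeal to uniqueness, whereas the paper simply recognises the summed equation as the single-agent optimal strategy.
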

\begin{proof}
By Theorem \ref{main}, the optimal strategies  of agents are given by
\begin{align*}    {q}^{i,\star}_t=\omega f(t,T)\left({Q}^{i,{q}^{i,\star}}_t-\overline{Q}_0^{i}\right) - \E\left[ \int_t^T f_1(t,s,T)\frac{P_s}{2\a}ds | \mathcal{F}_t\right]  + \frac{P_t}{2\a}, \,\, i=1,\ldots,p.
\end{align*}
Denoting $ q^\star_t:= \sum_i q^{i,\star}_{t}$ and similarly for other quantities and adding up the equations, we get:
\begin{align*}    
{q}^\star_t&=\omega f(t,T)\left({Q}^{{q}^\star}_t-\overline{Q}_0\right) - p\E\left[ \int_t^T f_1(t,s,T)\frac{P_s}{2\a}ds | \mathcal{F}_t\right]  + p\frac{P_t}{2\a}\\
d{Q}^{{q}^\star}_t&=-{q}^\star_t dt + \bar\rho_t  dW_t,  \quad {Q}^{{q}^\star}_0= \overline{Q}_0,
\end{align*}
where $W_t :=  \int_0^t \frac{1}{\bar \rho_t}\sum_{i=1}^p \rho^i_t d W^{i}_t$ is a $\mathbb F$-Brownian motion by Levy's theorem. Notice that the functions $f$ and $f_1$ do not change when $\alpha$, $\beta$ and $\gamma$ are multiplied by the same constant and do not depend on $\rho$. This coincides with the optimal strategy of a single storage agent with parameters $(\alpha/p,\beta/p,\gamma/p,\bar\rho )$.

\end{proof}


\section{Price formation}\label{price.sec}

Given the optimal strategy of a storage agent obtained in the previous section for an exogenous price process, we now aim to determine the equilibrium electricity price for a specific demand level. We consider an electricity market comprising conventional producers, consumers, renewable producers and storage facilities. 
 
The conventional producers are assumed to have a deterministic net supply function for the intraday market $C: \R \longmapsto \R $, which depends solely on the real-time electricity price and represents the net residual supply, that is, $C(P)$ is the difference of the total production of the conventional producers and the quantity they deliver, at any given time, according to their day-ahead market position and other long-term contracts. A positive value of $C(P)$ corresponds to a positive net supply to the intraday market and a negative value corresponds to a positive net demand.

The consumers are assumed to have a price-independent stochastic real-time net demand process $\widetilde D_t$. The demand $\widetilde D_t$ can also be both positive and negative, and the decomposition of market agents into conventional producers and consumers is arbitrary: the function $C$ may include price-dependent demand of consumers and the process $\widetilde D$ may include stochastic supply of conventional producers. The only assumption is that the stochastic component of demand/supply does not depend on the price. 

The renewable producers offer their full intermittent capacity $R_t$. The \textit{residual demand} is defined by ${D}_t:=\widetilde D_t-R_t$ and represents the energy requirement that must be fulfilled by conventional producers and storage players.

We assume that in the market there are  $n$ storage agents with different parameters and characteristics. In view of Corollary \ref{representative}, each storage agent can be seen as an aggregate of several agents with identical characteristics. 

We make the realistic assumption that the electricity price is constrained within a range $[\underline{P},\overline{P}]$. As a result, there are situations where the energy demand exceeds the producers' capacity, leading to a failure to satisfy the supply-demand equation.

{\color{black}In view of the above discussion, we shall now define the market equilibrium. In the definition, we use the following notation:
    \begin{align*}
        f^j(t,T)&:=\frac{1-u^je^{-2\omega^j\left(T-t\right)}}{1+u^je^{-2\omega^j\left(T-t\right)}},\\
    f_1^j(t,s,T)&:=\omega^j\frac{e^{-\omega^j(s-t)}-u^je^{-\omega^j(2T-s-t)}}{1+u^je^{-2\omega^j(T-t)}},
    \end{align*}
    with $u^j:=\frac{\sqrt{\a^j \b^j}-\g^j}{\sqrt{\a^j \b^j}+\g^j}$ and $\omega^j = \sqrt{\frac{\beta^j}{\alpha^j}}$.
    
\begin{definition}[Market equilibrium]\label{eq.def}
A market equilibrium is a price process $(P_t)_{0\leq t\leq T}$ and a collection of strategies of storage agents $(q^j_t)_{0\leq t\leq T}^{j=1,\dots,n}$, such that, for all $t\in[0,T]$,
\begin{align*}
P_t&=\essinf\left\{ \chi \in L^2(\mathcal{F}_t) : {D}_t\leq C(\chi)+\sum_j \tilde q_t^j(\chi) \right\} \wedge \overline{P} \vee \underline{P} ,\,\, 0 \leq t \leq T,
\end{align*}
where for each $t\in [0,T]$ and each $j = 1,\dots,n$ the function $\tilde q^j_t$ is defined by 
\begin{align*}
\tilde q_t^j(\chi)&=\omega^j f^j(t,T)\left(Q_t^{j,q^j}-\overline{Q}_0^j\right) - \E\left[ \int_t^Tf^j_1(t,s,T)\left(\frac{P_s}{2\a^j}-\kappa^{j}_s\right)ds | \mathcal{F}_t \right] + \frac{\chi}{2\a^j},
\end{align*}
with $Q^{j,q^j}$ the solution of the following SDE:
$$
dQ^{j,q^j}_t=(-q_t^j+\kappa^j_t)dt + \rho^j_t dW_t + \int_{\mathbb{R}^{\star}} \Xi^j(t,e)\widetilde N(dt,de), \,\, Q^{j,q^j}_0= \overline{Q}^j_0,
$$
where 
$$
q^j_t=\omega^j f^j(t,T)\left(Q^{j,q^j}_t-\overline{Q}^j_0\right) - \E\left[ \int_t^T f^j_1(t,s,T)\left(\frac{P_s}{2\a^j}-\kappa^j_s\right)ds \Big| \mathcal{F}_t\right]  + \frac{P_t}{2\a^j}.
$$
\end{definition}
}



In the following sections we discuss the existence and computation of the equilibrium price process. We first focus on a deterministic toy model, and then turn to the stochastic case.

\subsection{A fully deterministic toy model}\label{toy.sec}

We will now concentrate on a simplified model, enabling us to perform explicit computations and analyze limiting cases. In this section, we make the following assumptions:

\begin{assumption} \label{assdetermin}${}$
    \begin{itemize}
    \normalfont
        \item[(i)] There is one unique type of storage in the electricity market, i.e. $n=1$.
        \item[(ii)] The storage agents are not impacted by the Brownian and Poisson noises, i.e. $\rho \equiv 0$, $\kappa=0$  and $\Xi \equiv 0$. 
        \item[(iii)] The residual demand $D_t$ is deterministic.
        \item[(iv)] The conventional supply function $C$ is  linear, i.e. there exist $C_0,C_1>0$ such that $C(P)=C_0+C_1P$. 
        \item[(v)] The price is unbounded, i.e. $\overline{P}=\infty$ and $\underline{P} = -\infty$.
        \item[(vi)]$Q_0=0$.
    \end{itemize}
\end{assumption}
In particular, Assumption \ref{assdetermin} (v)  implies that energy demand is always fulfilled.
In this toy model, the definition of the price process simplifies to the following system of equations: for all $t$,
\begin{equation} 
\begin{cases}
  {D}_t = C(P_t)+ q_t,  \\ 
 \displaystyle  q_t=\omega f(t,T)Q^q_t - \int_t^Tf_1(t,T;s)\frac{P_s}{2\a}ds + \frac{P_t}{2\a}.
\end{cases}
\end{equation}


In the case of an exogenous deterministic price process $(P_t)$, we first provide in the following proposition an explicit expression of the optimal strategy in the present framework. Its proof can be found in Appendix  \ref{proof3.1}.
\begin{proposition}\label{price_exogeneous}
Let Assumption \ref{assdetermin} hold true and let $P \in L^1([0,T])$. Define 
 \begin{equation}\label{represc1}
     c_1(P):= \frac{\gamma\int_0^T \cosh(\omega(T-s)) \frac{P_s}{2\alpha} ds+ \beta\int_0^T \frac{\sinh(\omega(T-s))}{\omega} \frac{P_s}{2\alpha} ds}{\gamma  \omega\sinh\left(\omega T\right)+\beta \cosh\left(\omega T\right)}.
 \end{equation}
 Then, for all $t\in [0,T]$, the optimal strategy and the corresponding state process are 
\begin{align}
    q_t &= -c_1(P) \omega^2\cosh\left(\omega t\right) + \frac{P_t}{2\alpha}+\int_0^t \omega\sinh(\omega(t-s)) \frac{P_s}{2\alpha} ds,\\
    Q_t^q &= -c_1 \omega\sinh\left(\omega t\right) + \int_0^t \cosh(\omega(t-s))  \frac{P_s}{2\alpha}ds.
\end{align}    
\end{proposition}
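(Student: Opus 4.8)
The plan is to exploit the fact that Assumption \ref{assdetermin} collapses the control problem \eqref{Optimal} into a purely deterministic linear--quadratic problem. With $\rho\equiv 0$, $\kappa=0$, $\Xi\equiv 0$ and $\overline Q_0=0$ the state reduces to the ODE $\dot Q_t=-q_t$, $Q_0=0$, and one minimises
\begin{equation*}
J(q)=\int_0^T\Big(-P_sq_s+\tfrac{\alpha}{2}q_s^2+\tfrac{\beta}{2}Q_s^2\Big)\,ds+\tfrac{\gamma}{2}Q_T^2 .
\end{equation*}
Since $\alpha>0$, $J$ is strictly convex and coercive in $q$, so a unique minimiser exists (this also follows from Theorem \ref{main}, though that result only covers $P\in L^2$ and the explicit formula below will extend the characterisation to $P\in L^1$). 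The minimiser is characterised by the Pontryagin/Euler--Lagrange conditions, which I would write, with an adjoint variable $Y$, as the two-point boundary value problem
\begin{equation*}
\dot Q_t=-q_t,\quad Q_0=0,\qquad \dot Y_t=-\beta Q_t,\quad Y_T=\gamma Q_T,\qquad \alpha q_t=P_t+Y_t .
\end{equation*}
This is precisely the deterministic specialisation of the feedback relation of Theorem \ref{main}, so the two viewpoints are consistent.

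Next I would collapse this coupled system to a single scalar equation. Substituting $Y_t=\alpha q_t-P_t=-\alpha\dot Q_t-P_t$ into $\dot Y_t=-\beta Q_t$ gives the forced second-order linear ODE
\begin{equation*}
\ddot Q_t-\omega^2 Q_t=-\tfrac{1}{\alpha}\dot P_t,\qquad \omega=\sqrt{\tfrac{\beta}{\alpha}},
\end{equation*}
whose homogeneous solutions are exactly $\cosh(\omega t)$ and $\sinh(\omega t)$; this is the origin of the hyperbolic functions in the statement. I would solve by variation of constants with Green's kernel $\omega^{-1}\sinh(\omega(t-s))$ and then \emph{immediately integrate by parts} to move the derivative off $P$. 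This is the crucial technical point, because $P$ is only assumed to be in $L^1([0,T])$ and cannot be differentiated directly; after integration by parts the forcing produces the absolutely convergent price kernel $\int_0^t\cosh(\omega(t-s))\tfrac{P_s}{2\alpha}\,ds$ of the statement, plus a boundary term proportional to $\sinh(\omega t)$ that merges into the homogeneous part. Since $\cosh$ and $\sinh$ are bounded on $[0,T]$, every integral is finite for $P\in L^1$, which is exactly how the weaker integrability hypothesis is accommodated. Imposing $Q_0=0$ then annihilates the $\cosh(\omega t)$ mode and leaves the one-parameter family $Q_t=c\,\sinh(\omega t)+\int_0^t\cosh(\omega(t-s))\tfrac{P_s}{2\alpha}\,ds$.

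Finally I would fix the free constant $c$ (equivalently $c_1$) from the transversality condition $Y_T=\gamma Q_T$, that is $\alpha\dot Q_T+\gamma Q_T=-P_T$. Here a small miracle makes the computation clean: the pointwise boundary term $-P_T$ coming from $\alpha\dot Q_T$ cancels the $-P_T$ on the right-hand side, which is why $c_1$ contains no stray evaluation of $P$ at $T$. Evaluating $Q_T$ and $\dot Q_T$ from the representation above and collecting the terms multiplying $c$ assembles them into $\gamma\sinh(\omega T)+\alpha\omega\cosh(\omega T)=\tfrac{1}{\omega}\big(\gamma\omega\sinh(\omega T)+\beta\cosh(\omega T)\big)$, where I used $\alpha\omega=\beta/\omega$; after the normalisation $c=-c_1\omega$ this reproduces exactly the denominator of \eqref{represc1}, while the remaining integrals give the numerator $\gamma\int_0^T\cosh(\omega(T-s))\tfrac{P_s}{2\alpha}\,ds+\beta\int_0^T\tfrac{1}{\omega}\sinh(\omega(T-s))\tfrac{P_s}{2\alpha}\,ds$. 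This yields the stated $Q^q$, and the optimal rate is recovered as $q_t=-\dot Q^q_t$, giving the displayed closed form. The only genuine obstacle is the low regularity of $P$, handled once and for all by the integration-by-parts step; everything else is routine linear-ODE bookkeeping, with the terminal condition selecting the precise constant $c_1$.
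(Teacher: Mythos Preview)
Your approach is correct and genuinely different from the paper's. The paper does \emph{not} re-derive the optimality conditions in the deterministic setting; instead it imports the closed-loop feedback formula of Theorem~\ref{main},
\[
q_t=\omega f(t,T)\,Q^q_t-\int_t^T f_1(t,s,T)\,\tfrac{P_s}{2\alpha}\,ds+\tfrac{P_t}{2\alpha},
\]
treats this as a first-order linear ODE for $Q$ with forward-looking forcing, solves it by integrating factor $e^{G(t)}$ with $G(t)=\omega t-\log(1+ue^{-2\omega(T-t)})$, and then performs several pages of algebraic simplification (Fubini, partial fractions in the variable $u=\tfrac{\sqrt{\alpha\beta}-\gamma}{\sqrt{\alpha\beta}+\gamma}$) to collapse the result to the hyperbolic-kernel form of the proposition. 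Your route---Pontryagin two-point BVP, elimination of the adjoint to obtain $\ddot Q-\omega^2 Q=-\dot P/\alpha$, Green's kernel $\omega^{-1}\sinh(\omega(t-s))$, and an integration by parts that simultaneously removes $\dot P$ and produces the $\cosh$-kernel---is considerably more direct: the terminal constant $c_1$ appears in one linear equation rather than emerging from the cancellation of the $u$-dependent pieces. What the paper's route buys is consistency with the general stochastic machinery (the representation is visibly the $\rho=\Xi=\kappa=0$ specialisation of Theorem~\ref{main}); what your route buys is a self-contained argument that does not pass through Theorem~\ref{main}'s $P\in\mathcal H^2$ hypothesis and makes transparent why $P\in L^1$ suffices (all kernels are bounded on $[0,T]$, and your cancellation of the $P_T$ boundary term in the transversality condition is exactly the observation that no pointwise evaluation of $P$ survives). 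One small caveat: your second-order ODE is formal for $P\in L^1$; to be fully rigorous you should either argue by density from smooth $P$, or verify \emph{a posteriori} that the explicit $Q$ satisfies the integrated Euler--Lagrange relation $-\alpha\dot Q_t-P_t=\gamma Q_T+\beta\int_t^T Q_s\,ds$, which is well-defined for $P\in L^1$.
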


The following theorem provides the explicit expression for the equilibrium price. Its proof is provided in Appendix \ref{prooftoy.sec}.
\begin{theorem}\label{toy.thm}
   Let $\Tilde\omega:=\omega\sqrt{\frac{C_1}{C_1 + \frac{1}{2\alpha}}}$ and $\Tilde{c}_1$ be given by
   \begin{equation}\label{c1}
       \Tilde{c}_1 := \frac{\gamma A' + \beta A}{2\alpha\omega(\gamma  \omega\sinh (\omega T)+\beta \cosh (\omega T))-\gamma B'-\beta B},
\end{equation}
where
\begin{align} \label{c_1_constantsA}
    A &:= \frac{\omega}{\tilde\omega(C+1/2\alpha)}\int_0^T ( D_s - C_0) \sinh(\tilde\omega(T-s)) ds,\\ \label{c_1_constantsB}
B&: = \frac{\omega^3}{\tilde\omega(C+1/2\alpha)}\int_0^T \cosh(\omega s) \sinh(\tilde\omega(T-s)) ds,
\\
    A'&: =  \frac{\omega}{(C+1/2\alpha)}\int_0^T (D_s - C_0) \cosh(\tilde\omega(T-s)) ds, \label{c_1_constantsA'}\\
    B'&: = \frac{\omega^3}{(C+1/2\alpha)}\int_0^T \cosh(\omega s) \cosh(\tilde\omega(T-s)) ds.\label{c_1_constantsB'}
\end{align}
   Moreover, let us define the following auxiliary function:
    \begin{multline}
        X(t):= \frac{\omega}{\tilde\omega(C_1+\frac{1}{2\alpha})}\int_0^t\left(D_s - C_0\right) \sinh(\tilde\omega(t-s)) ds \\+  \Tilde{c}_1\frac{\omega^3}{\tilde\omega(C+\frac{1}{2\alpha})}\int_0^t \cosh(\omega s) \sinh(\tilde\omega(t-s)) ds
    \end{multline}
    Then, for all $t \in [0,T]$, the  unique equilibrium price is given by
\begin{equation}
        P_t= \frac{D_t - C_0 - \frac{\omega}{2\alpha} X(t) + \Tilde{c}_1 \omega^2 \cosh(\omega t)}{C_1+ \frac{1}{2\alpha}}.
    \end{equation}
\end{theorem}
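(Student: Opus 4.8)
The plan is to reduce the equilibrium to a linear Volterra integral equation for $P$, in which the constant $c_1$ is first carried as an unknown parameter and only pinned down at the very end by a self-consistency (fixed-point) requirement.

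First I would write out market clearing. Under Assumption \ref{assdetermin} the equilibrium system is $D_t = C_0 + C_1 P_t + q_t$, so $q_t = D_t - C_0 - C_1 P_t$. Equating this with the explicit optimal strategy of Proposition \ref{price_exogeneous} and gathering the terms in $P_t$ yields, writing $\mu := C_1 + \frac{1}{2\alpha}$,
\begin{equation*}
\mu\, P_t + \frac{\omega}{2\alpha}\int_0^t \sinh\!\big(\omega(t-s)\big)\, P_s\, ds = D_t - C_0 + c_1\,\omega^2\cosh(\omega t) =: h(t),
\end{equation*}
a Volterra equation of the second kind in which $c_1$ is, for the moment, a free constant.

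Next I would solve this equation in closed form, using that the kernel $\sinh(\omega\,\cdot)$ converts convolution into a second-order differential operator. Setting $X(t):=\int_0^t \sinh(\omega(t-s))P_s\,ds$, differentiating twice gives $X''-\omega^2 X = \omega P$ with $X(0)=X'(0)=0$; inserting $P=\mu^{-1}(h-\tfrac{\omega}{2\alpha}X)$ turns this into $X''-\tilde\omega^2 X = \tfrac{\omega}{\mu}h$, where the reduced frequency $\tilde\omega^2 = \omega^2\big(1-\tfrac{1}{2\alpha\mu}\big)=\omega^2\tfrac{C_1}{C_1+1/2\alpha}$ emerges exactly as in the statement. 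Solving this ODE with vanishing initial data through the Green's function $\tilde\omega^{-1}\sinh(\tilde\omega(t-s))$ reproduces verbatim the function $X(t)$ of the theorem, and then $P_t=\mu^{-1}(h(t)-\tfrac{\omega}{2\alpha}X(t))$ is the announced price formula (still depending on the parameter $c_1$). I would also record here that the four constants of the statement are simply the endpoint data of the two components of $X$: with $X=X^{(0)}+c_1 X^{(1)}$ one has $A=X^{(0)}(T)$, $B=X^{(1)}(T)$, $A'=(X^{(0)})'(T)$, $B'=(X^{(1)})'(T)$, so that $X(T)=A+c_1 B$ and $X'(T)=A'+c_1 B'$.

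Finally I would fix $c_1$ by self-consistency: the constant entering the optimal strategy must equal $c_1(P)$ from \eqref{represc1} evaluated at the equilibrium price. The step I expect to be the most delicate is reconciling the two frequencies: \eqref{represc1} integrates $P$ against the $\omega$-kernels $\cosh(\omega(T-s))$ and $\sinh(\omega(T-s))$, whereas $A,B,A',B'$ carry $\tilde\omega$. The bridge is that, by the very definition of $X$, $X(T)=\int_0^T \sinh(\omega(T-s))P_s\,ds$ and $X'(T)=\omega\int_0^T \cosh(\omega(T-s))P_s\,ds$; substituting these into \eqref{represc1} collapses the self-consistency condition to the scalar identity
\begin{equation*}
2\alpha\,\omega\, c_1\big(\gamma\omega\sinh(\omega T)+\beta\cosh(\omega T)\big) = \beta\, X(T) + \gamma\, X'(T).
\end{equation*}
Inserting $X(T)=A+c_1B$ and $X'(T)=A'+c_1B'$ turns this into a single affine equation in $c_1$, whose solution is precisely $\tilde c_1$ in \eqref{c1}. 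Uniqueness follows by running the argument backwards: any equilibrium price satisfies the Volterra equation for the number $c:=c_1(P)$, hence equals the unique Volterra solution associated with that parameter, and $c$ must then solve the affine equation above; since the latter has a unique root whenever its coefficient — equivalently the denominator in \eqref{c1} — is nonzero, the equilibrium is unique. I would finish by checking this non-degeneracy, using $\tilde\omega<\omega$ to control $B$ and $B'$.
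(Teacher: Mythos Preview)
Your proposal is correct and follows essentially the same route as the paper: both introduce the convolution $X(t)=G^P(t)=\int_0^t\sinh(\omega(t-s))P_s\,ds$, convert the Volterra equation into the second-order ODE $\ddot G^P-\tilde\omega^2 G^P=\tfrac{\omega}{\mu}h$, solve it via the $\tilde\omega$-Green's function, and then close the loop by recognising that $G^P(T)=A+c_1B$ and $\dot G^P(T)=A'+c_1B'$ reduce \eqref{represc1} to a linear equation for $c_1$. Your explicit uniqueness discussion and the check that the denominator in \eqref{c1} is nonzero go slightly beyond what the paper writes out, but are natural complements to the same argument.
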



In the following proposition, we study the behavior of the equilibrium price when the amount of storage in the market is either very small or very large. To this end, in view of Corollary \ref{representative}, we consider a storage agent with parameters $(\alpha/p,\beta/p,\gamma/p)$ and interpret $p\to 0$ as the ``low storage'' limit and $p\to \infty$ as the ``high storage'' limit. We find, as expected, that in the low storage limit, the price converges to the price without storage. In the high storage limit, we find that the derivative of the price converges to zero, in other words, the storage assets completely cancel price fluctuations. 

\begin{proposition}
        Let  $\alpha,\beta,\gamma>0$. Denote by $(P^p_t)$ the equilibrium electricity price in a market with a single storage agent with parameters $(\alpha/p,\beta/p,\gamma/p)$. Then, for all $t\in [0,T]$,
    \begin{equation}
        \lim_{p\downarrow 0} P^p_t = \frac{D_t-C_0}{C_1} .
    \end{equation}
Assume in addition that the residual demand function $t \mapsto  D_t$ is differentiable. Then, for all $t\in [0,T]$,
    \begin{equation}
        \lim_{p\to \infty}\frac{\partial P^p_t}{\partial t}= 0.
    \end{equation}
\end{proposition}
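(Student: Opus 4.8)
The plan is to take the explicit formula for the equilibrium price $P^p_t$ furnished by Theorem \ref{toy.thm}, with parameters $(\alpha/p,\beta/p,\gamma/p)$ substituted in, and compute the two limits directly. The first task is to understand how each ingredient of the formula depends on $p$. Observe that $\omega = \sqrt{\beta/\alpha}$, $u$, and hence $f$ and $f_1$ are invariant under the scaling $(\alpha,\beta,\gamma)\mapsto(\alpha/p,\beta/p,\gamma/p)$ (this is already noted in the proof of Corollary \ref{representative}). The only genuine $p$-dependence enters through the combination $1/(2\alpha)$, which becomes $p/(2\alpha)$, and through $\tilde\omega = \omega\sqrt{C_1/(C_1 + 1/(2\alpha))}$, which becomes $\tilde\omega_p = \omega\sqrt{C_1/(C_1 + p/(2\alpha))}$. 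So I would first rewrite $P^p_t$, $X$, $\tilde c_1$, and the constants $A,A',B,B'$ with every occurrence of $1/(2\alpha)$ replaced by $p/(2\alpha)$.

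For the low-storage limit $p\downarrow 0$, I would simply let $p\to 0$ in this rewritten formula. As $p\to 0$ we have $p/(2\alpha)\to 0$ and $\tilde\omega_p\to\omega$. The plan is to check that the terms carrying the storage contribution vanish: the prefactor $\frac{\omega}{2\alpha}X(t)$ contains an explicit factor $p/(2\alpha)$ in front (coming from the $\frac{1}{2\alpha}$ that multiplies $X$ in the price formula), so it tends to $0$, and one must verify that $\tilde c_1\,\omega^2\cosh(\omega t)$ also tends to $0$. For the latter, I would inspect the defining ratio \eqref{c1}: the numerator $\gamma A' + \beta A$ and the denominator both scale in a way that, after the substitution, the constants $A,A',B,B'$ stay bounded while $\tilde c_1$ acquires a vanishing factor as $p\to 0$. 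Once these two storage terms drop out, the denominator $C_1 + p/(2\alpha)\to C_1$, leaving $P^0_t = (D_t - C_0)/C_1$, which is exactly the price equating deterministic residual demand with the bare conventional supply $C(P)=C_0+C_1P$ and no storage. This confirms the claimed limit and also matches the economic interpretation.

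For the high-storage limit $p\to\infty$, the natural move is to differentiate the price formula in $t$ and study $\partial_t P^p_t$ as $p\to\infty$. Here I would use $X'(t)$, which by differentiating the integral expression satisfies a relation of Leibniz type; because $X$ is built from $\sinh(\tilde\omega_p(t-s))$ kernels, its time derivative reintroduces $\cosh$ kernels and a boundary term. The cleanest approach is probably to recall that, in the toy model, $P^p$, $q$, and $Q^q$ solve the coupled system at the start of Section \ref{toy.sec}, from which one can extract an expression for $\partial_t P^p_t$ in terms of $\partial_t D_t$ and storage quantities; then show the storage response dominates as $p\to\infty$ and forces $\partial_t P^p_t\to0$. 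Concretely, as $p\to\infty$ we have $\tilde\omega_p\to0$ (since $C_1/(C_1+p/(2\alpha))\to0$), so the $\sinh(\tilde\omega_p(t-s))$ kernels linearize and the large factor $\omega/(\tilde\omega_p(C_1+p/(2\alpha)))$ must be tracked carefully, using $\tilde\omega_p(C_1+p/(2\alpha)) = \omega\sqrt{C_1(C_1+p/(2\alpha))}\to\infty$.

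The main obstacle I anticipate is the high-storage limit: the formula for $P^p_t$ is a ratio in which several terms blow up or vanish simultaneously as $p\to\infty$, so one must extract the leading-order asymptotics of $\tilde c_1$, of $X(t)$, and of their combination, and show the divergent pieces cancel so that $\partial_t P^p_t\to0$ rather than diverging. I expect the right bookkeeping is to factor out the dominant powers of $p$ (equivalently of $\tilde\omega_p^{-1}$) from numerator and denominator before passing to the limit, and to differentiate only after this asymptotic simplification. Using the differentiable hypothesis on $t\mapsto D_t$ is essential precisely at this step, since it lets me control $\partial_t X(t)$ through $\partial_t D_t$ and the boundary term at $s=t$. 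Once the leading behavior is isolated, the conclusion $\lim_{p\to\infty}\partial_t P^p_t = 0$ should follow, reflecting that unlimited storage perfectly flattens the price path.
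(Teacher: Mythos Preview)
Your plan matches the paper's proof: both parts proceed by substituting $(\alpha/p,\beta/p,\gamma/p)$ into the explicit formula of Theorem \ref{toy.thm}, noting that $\omega$ and $u$ are scale-invariant while only $1/(2\alpha)\mapsto p/(2\alpha)$ and $\tilde\omega$ change, and then passing to the limit. Your treatment of Part 1 is exactly what the paper does.

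For Part 2 your strategy is right but you are anticipating more trouble than there is. The paper simply differentiates $P^p_t$ and splits the result as $E(p)+F(p)$, where $E(p)$ carries the $D'_t$ and convolution terms with prefactor $1/(C_1+p/2\alpha)$, and $F(p)$ carries the $\tilde c_1^p\cosh(\omega t)$ contribution. No divergent pieces need to cancel: in $E(p)$ the bracket stays bounded (since $\tilde\omega_p\to 0$ makes $\cosh(\tilde\omega_p(t-s))\to 1$ and the remaining coefficient tends to $\omega^2$), so the prefactor $1/(C_1+p/2\alpha)\to 0$ kills it; in $F(p)$ the key observation is that $pA^p$, $pB^p$, $pA^{p\prime}$, $pB^{p\prime}$ all converge to finite nonzero limits as $p\to\infty$, which forces $\tilde c_1^p$ itself to stay bounded, and then the factor $(C_1+p/2\alpha)^{-3/2}\sinh(\tilde\omega_p t)\to 0$ does the rest. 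So rather than extracting leading orders and looking for cancellations, you can argue directly that each summand is (bounded)$\times$(vanishing prefactor). The differentiability of $D$ is used only to write $D'_t$ in $E(p)$.
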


\begin{proof} \emph{Part 1.} To make the dependence on $p$ explicit, we denote the quantities appearing in Theorem \ref{toy.thm}, computed with parameters $(\alpha/p,\beta/p,\gamma/p)$, by $\tilde\omega^p$, $\tilde c^p_1$, $A^p$, $B^p$, $A^{p\prime}$, $B^{p\prime}$ and $X^p(t)$. Note that $\omega$ does not depend on $p$. Clearly, $\lim_{p\downarrow 0} \tilde\omega^p = \omega$. This implies that $A^p$, $B^p$, $A^{p\prime}$, $B^{p\prime}$ converge to finite limits as $p\downarrow 0$, and $X^p(t)$ remains uniformly bounded. On the other hand,
$$
 \tilde{c}^p_1 = \frac{p(\gamma A^{p\prime} + \beta A^p)}{2\alpha\omega(\gamma \omega\sinh (\omega T)+\beta \cosh (\omega T))-\gamma p B^{p\prime}-\beta p B^p},
$$
so that $\lim_{p\downarrow 0} \tilde c^p_1 = 0$. Substituting this into the expression for $P^p_t$, we finally find:
$$
\lim_{p\downarrow 0}P^p_t= \lim_{p\downarrow 0}\frac{D_t - C_0 - \frac{\omega p}{2\alpha} X^p(t) + \Tilde{c}^p_1 \omega^2 \cosh(\omega t)}{C_1+ \frac{p}{2\alpha}} = \frac{D_t - C_0}{C_1}.
$$

\noindent \emph{Part 2.} 
Direct differentiation of $P^p_t$ yields: 
\begin{align*}
    \frac{\partial P^p_t}{\partial t}  
    &= \underbrace{\frac{1}{C_1+p/2\a}\left\{D'_t-\frac{\b/p}{2C_1\frac{\a^2}{p^2}+\frac{\a}{p}}\int_0^t(D_s-C_0)\cosh(\omega\sqrt{\frac{C_1}{C_1+\frac{p}{2\a}}} (t-s))ds\right\}}_{:=E(p)} \\ &\quad+\underbrace{c_1^{p}\frac{\b\sqrt{\b}}{\a\sqrt{\a}(C_1+p/2\a)^{3/2}} \sinh(\omega\sqrt{\frac{C_1}{C_1+\frac{p}{2\a}}} t))}_{:=F(p)},
\end{align*}
with $(D_t')$ being the time derivative of $ D_t$.
Clearly, $\underset{p \rightarrow + \infty}{\lim}E(p)= 0$. Examining the expressions (\ref{c_1_constantsA}--\ref{c_1_constantsB'}), we see that, as $p\to \infty$, $pA^p$, $pB^p$, $pA^{p\prime}$ and $pB^{p\prime}$ converge to nonzero limits, which means that $\tilde c^p_1$ remains bounded and therefore $\underset{p \rightarrow + \infty}{\lim}F(p)= 0$, which implies $$\underset{p \rightarrow + \infty}{\lim}\frac{\partial P^p_t}{\partial t}= 0.$$

\end{proof}
\subsection{Price Formation in a stochastic framework} \label{section_price_sto}
In this section we suppose that the filtration $\mathbb F$ is the completed natural filtration of \textcolor{black}{a $2n+1$-dimensional Brownian motion $W$ and a Poisson random measure $N$  defined on $ \R ^\star \times [0,T]$  with compensator $\nu(de)dt$, such that $\nu(de)$ is a $\sigma$-finite measure on $\R ^\star$, equipped with its Borel field $\mathcal{B}(\R ^\star)$. Recall that we denote by $\widetilde{N}$ the compensated jump measure, i.e. $\widetilde{N}(dt, de) := N(dt, de) - \nu(de)dt$.}




For a given control $q^j \in \mathcal{A}$, the state process of the $j$-th agent satisfies the following SDE:
\begin{align*}
                dQ^{j,q^j}_t&=(-q^j_t + \kappa^j_t) dt+ \textcolor{black}{\r_t^j dW_t} + \int_{\R^\star} {\Xi^j(t,e)} \textcolor{black}{\widetilde N(dt,de)}, \, \, Q^{j,q^j}_0=\overline{Q}_0^j,
\end{align*}
We make the following assumption on the coefficients of this process, the residual demand and the supply function:
\begin{assumption} \label{state_process}${}$
\begin{itemize}
\item[i.] For all $j=1,\ldots,n$, $\rho^j\in \textcolor{black}{\mathcal H^2_{2n+1}}$, $\Xi^j \in \mathcal H^2_\nu$  and  there exists a non-negative function $\Psi_{j}$ such that $\int_{\mathbb{R}^\star}\Psi_{j}^2(de)\nu(de)<\infty$ such that $|\Xi^{j}(t,e)| \leq \Psi^{j}(e)$  for all $t \in [0,T]$ and $e \in \R^\star$. 
\item[ii] For all $j=1,\ldots,n$, the external energy source $\kappa^j_t$ follows the SDE 
\begin{equation}
     \textcolor{black}{d\kappa_t^j= \iota^j(t,\kappa^{j}_t)dt + \vartheta^j(t,\kappa^{j}_t)dW_t}, \quad \kappa^j_0= 0,
\end{equation}
where $\iota^j:\Omega \times [0,T] \times \R \mapsto \R$ and \textcolor{black}{$\vartheta^j:\Omega \times [0,T] \times \R \mapsto \R^{2n+1}$} are such that for all $x \in \R$, the processes $\iota^j(\cdot,\cdot,x)$ and $\vartheta^j(\cdot,\cdot,x)$ are $\mathbb{F}$-progressively measurable and there exists $K<\infty$ such that, for all $t\in[0,T]$, and all $(x,x') \in \R^2$, 
\begin{align*}
    &|\iota^j(t,x)|+ \textcolor{black}{\|\vartheta^j(t,x)\|} \leq K(1+|x|), \\
    & |\iota^j(t,x) - \iota^j(t,x')|  +\textcolor{black}{\|\vartheta^j(t,x) - \vartheta^j(t,x')\|}  \leq K|x-x'|.
\end{align*}
 
\item[iii.] The residual demand $({D}_t)$ has the following dynamics :
\begin{equation} \label{Demand}
    d{D}_t= \mu(t,{D_t}) dt+\sigma(t,{D_t}) \textcolor{black}{dW_t},
\end{equation}
where the coefficients $\mu:\Omega \times [0,T] \times \R \longmapsto \R$ and $\sigma:\Omega \times [0,T] \times \R \longmapsto \R^{2n+1}$ are such that for all $x \in \R$, the processes $\mu(\cdot,\cdot,x)$ and $\sigma(\cdot,\cdot,x)$ are $\mathbb{F}$-progressively measurable. Furthermore, suppose that there exists $K<\infty$ such that for all $t\in[0,T]$, and all $(x,x') \in \R^2$, 
$$|\mu(t,x)|+ \textcolor{black}{\|\sigma(t,x)\|} \leq K(1+|x|),\quad |\mu(t,x) - \mu(t,x')|+\textcolor{black}{\|\sigma(t,x) - \sigma(t,x')\|}  \leq K|x-x'|
$$
\item[iv.] The function $L:\R \to \R,\ x \mapsto C(x)+x\sum_i\frac{1}{2\a^i}$ is invertible and its inverse is uniformly Lipshitz continuous with constant $C$. 
\end{itemize}
 \end{assumption}

We assume that every storage agent in the market is a price taker (considers the price process as exogenous). The $j$-th agent determines its injection / withdrawal strategy by solving the optimization problem
\begin{equation*}
            \underset{q^j \in \A }{\inf} \E \left\{ \int_0^{T} - \left(P_tq^j_t-\frac{\alpha^j}{2}\left(q^j_t\right)^2\right) +\frac{\beta^j}{2}\left(Q^{j,q^j}_t-\overline{Q}_0^j\right)^2 dt + \frac{\gamma^j}{2}\left(Q^{j,q^j}_T-\overline{Q}_0^j\right)^2 \right\}.
\end{equation*}
By Theorem \ref{main}, we obtain that there exists an unique optimal control, which is given, for each $j=1,\ldots,n$, by 
\begin{align*}
q^{j,\star}_t&=\omega^jf^j(t,T)\left(Q_t^{j,q^{j,\star}}-\overline{Q}_0^{j}\right) - \E\left[ \int_t^Tf^j_1(t,T;s)\left(\frac{P_s}{2\a^j}-\kappa^{j}_s\right)ds | \mathcal{F}_t \right] + \frac{P_t}{2\a^j},
\end{align*}
and the controlled state process of the $j$-th agent  along the optimal control is given by
\begin{align*}
                dQ^{j,q^{j,\star}}_t&=(-q^{j,\star}_t + \kappa^j_t) dt+ \r_t^j \textcolor{black}{dW_t} + \int_{\R^\star} \Xi^j(t,e) \textcolor{black}{\widetilde N(dt,de)}, \, \, Q^{j,q^{j,\star}}_0=\overline{Q}_0^j.
\end{align*}



 We first give the following preliminary Lemma, which, for a given price process $({P}_t)$, gives a representation of $(q_t^j)$, for all $1 \leq j \leq n$, in terms of the solution of a specific FBSDE. 
 \begin{lemma}\label{repres}
 Let $({P}_t) \in \mathcal{H}^2$ be a given price process. For $1 \leq j \leq n$, let $(Y_t^{j,1}, Z_t^{j,1},\Gamma_t^{j,1})$ and $(Y_t^{j,2}, Z_t^{j,2}, \Gamma_t^{j,2})$ be the unique solutions in  $\mathcal{S}^2 \times \mathcal{H}_{2n+1}^2 \times\textcolor{black}{ \mathcal{H}^2_{\nu}}$ of the following BSDEs:
 \begin{align}
 \begin{cases}
dY_t^{j,1}&=-e^{-\omega^j t}\left(\frac{P_t}{2\a^j}-\kappa^j_t\right) dt + Z^{j,1}_tdW_t+\int_{\mathbb{R}^\star}\Gamma^{j,1}(t,e)\Tilde{N}(dt,de), \,\,\, Y_T^{j,1}=0, \nonumber\\
dY_t^{j,2}&=-e^{\omega^j t}\left(\frac{P_t}{2\a^j}-\kappa^j_t\right)dt + Z^{j,2}_tdW_t+\int_{\mathbb{R}^\star}\Gamma^{j,2}(t,e)\Tilde{N}(dt,de), \,\,\, Y_T^{j,2}=0.\\
\end{cases}
 \end{align}
 Then, for $1 \leq j \leq n$, the optimal control $(q_t^{j,\star})$ is given by:
 \begin{align}\label{rep1}
 q^{j,\star}_t=\omega^{j} f^j(t,T) \left(Q_t^{j,q^{j,\star}}-\overline{Q}_0^j\right) +\frac{P_t}{2\a^j}-F_1^j(t,T)Y^{j,1}_t +F_2^j(t,T)Y^{j,2}_t, 
 \end{align}
where
\begin{equation}\label{coef}
\begin{cases}
     F_1^j(t,T):=\omega^j \frac{e^{\omega^j t}}{1+u^je^{-2\omega^j (T-t)}}\\
     F_2^j(t,T):=u^j\omega^j \frac{e^{-\omega^j (2T-t)}}{1+u^je^{-2\omega^j (T-t)}},
\end{cases}
  \end{equation}
with $u^j:=\frac{\sqrt{\a^j \b^j}-\g^j}{\sqrt{\a^j \b^j}+\g^j}$.
 \end{lemma}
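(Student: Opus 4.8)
The plan is to solve the two auxiliary BSDEs explicitly and then substitute their solutions into the right-hand side of \eqref{rep1}, reducing the whole claim to an elementary identity between exponential kernels. I first observe that both BSDEs are linear with generators that do \emph{not} depend on $(Y,Z,\Gamma)$ and carry a zero terminal condition; they are therefore genuine martingale-representation problems rather than fixed-point problems. Consequently their solutions are given, for $i=1,2$, by the conditional expectations
\[
Y^{j,1}_t=\E\Big[\int_t^T e^{-\omega^j s}\Big(\tfrac{P_s}{2\a^j}-\kappa^j_s\Big)\,ds \,\Big|\, \mathcal{F}_t\Big],\qquad Y^{j,2}_t=\E\Big[\int_t^T e^{\omega^j s}\Big(\tfrac{P_s}{2\a^j}-\kappa^j_s\Big)\,ds \,\Big|\, \mathcal{F}_t\Big],
\]
with $(Z^{j,i},\Gamma^{j,i})$ obtained from the martingale representation theorem in the Brownian--Poisson filtration $\mathbb{F}$ applied to the square-integrable martingale $t\mapsto\E[\int_0^T e^{\mp\omega^j s}(\tfrac{P_s}{2\a^j}-\kappa^j_s)ds\mid\mathcal{F}_t]$. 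Square integrability of $Y^{j,i}$ follows from Doob's inequality, and of $(Z^{j,i},\Gamma^{j,i})$ from the It\^o isometry, giving the unique $\mathcal{S}^2\times\mathcal{H}^2_{2n+1}\times\mathcal{H}^2_\nu$ solutions asserted in the statement.

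The only integrability input needed above is that the drivers $s\mapsto e^{\mp\omega^j s}(\tfrac{P_s}{2\a^j}-\kappa^j_s)$ belong to $\mathcal{H}^2$. This is where the hypotheses enter: $P\in\mathcal{H}^2$ by assumption, while $\kappa^j\in\mathcal{H}^2$ follows from Assumption \ref{state_process}(ii), since $\kappa^j$ solves an SDE with Lipschitz coefficients of linear growth started from $0$, which yields $\E[\sup_{t\le T}|\kappa^j_t|^2]<\infty$ by the standard moment estimates. As $e^{\mp\omega^j s}$ is bounded on $[0,T]$, the products are square integrable, and I would dispatch the well-posedness of these linear BSDEs in a single line as a standard fact.

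With the explicit solutions in hand, I would substitute them into $-F_1^j(t,T)Y^{j,1}_t+F_2^j(t,T)Y^{j,2}_t$. Since $F_1^j,F_2^j$ defined in \eqref{coef} and the exponentials $e^{\omega^j t}$, $e^{-\omega^j(2T-t)}$ are deterministic, hence $\mathcal{F}_t$-measurable, they can be moved inside the conditional expectation, and the two integrals over $[t,T]$ combined into one. Using the bookkeeping identities $e^{-\omega^j(s-t)}=e^{\omega^j t}e^{-\omega^j s}$ and $e^{-\omega^j(2T-s-t)}=e^{-\omega^j(2T-t)}e^{\omega^j s}$, and factoring the $s$-independent term $(1+u^je^{-2\omega^j(T-t)})^{-1}$ out of the integral, the integrand is recognized as exactly $f_1^j(t,s,T)(\tfrac{P_s}{2\a^j}-\kappa^j_s)$. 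This yields
\[
-F_1^j(t,T)Y^{j,1}_t+F_2^j(t,T)Y^{j,2}_t = -\,\E\Big[\int_t^T f_1^j(t,s,T)\Big(\tfrac{P_s}{2\a^j}-\kappa^j_s\Big)\,ds\,\Big|\,\mathcal{F}_t\Big].
\]
Inserting this equality into the optimal-control formula of Theorem \ref{main}, applied to the $j$-th agent, produces precisely \eqref{rep1}.

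The argument is entirely computational, so I do not expect a serious analytic obstacle. The two points that require care are: (a) justifying that the linear BSDEs are well-posed in the stated spaces, which reduces to checking the driver lies in $\mathcal{H}^2$ via $\kappa^j\in\mathcal{H}^2$; and (b) carrying out the exponent bookkeeping accurately, so that the kernel of $f_1^j$ is reconstructed exactly with the denominator $1+u^je^{-2\omega^j(T-t)}$ correctly pulled out of the $s$-integral and the signs of the two contributions matched.
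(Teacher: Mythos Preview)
Your proposal is correct and follows essentially the same approach as the paper: both arguments rest on the identification $Y^{j,i}_t=\E[\int_t^T e^{\mp\omega^j s}(\tfrac{P_s}{2\alpha^j}-\kappa^j_s)ds\mid\mathcal{F}_t]$ together with the elementary kernel identity $f_1^j(t,s,T)=F_1^j(t,T)e^{-\omega^j s}-F_2^j(t,T)e^{\omega^j s}$, after which the claim follows by plugging into the formula of Theorem~\ref{main}. The only cosmetic difference is the direction: the paper starts from $f_1^j$ and splits it into the two exponential pieces, whereas you start from the BSDE solutions and recombine them; your version is also slightly more explicit about the well-posedness of the linear BSDEs.
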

\begin{proof}
Using \eqref{coef}, we can rewrite the expression of $(q_t^{j,\star})$ as follows:
\begin{align} \label{equation_q}
    q^{j,\star}_t&=\omega^j f^j(t,T)\left(Q_t^{j,q^{j,\star}}-\overline{Q}_0^j\right) - F_1^j(t,T)\E\left[ \int_t^T e^{-\omega^j s}\left(\frac{P_s}{2\a^j}-\kappa^j_s\right)ds | \mathcal{F}_t \right] \nonumber \\&
       + F_2^j(t,T)\E\left[ \int_t^T e^{\omega^j s}\left(\frac{P_s}{2\a^j}-\kappa^j_s\right)ds | \mathcal{F}_t \right]+ \frac{P_t}{2\a^j}.
\end{align}
Taking conditional expectation with respect to $\mathcal{F}_t$ in the dynamics of $Y_t^{j,1}$ and $Y_t^{j,2}$, we get that
$$Y_t^{j,1}= \E\left[ \int_t^T e^{-\omega^j s}\left(\frac{P_s}{2\a^j}-\kappa^j_s\right)ds | \mathcal{F}_t \right]$$
and
$$Y_t^{j, 2}= \E\left[ \int_t^T e^{\omega^j s}\left(\frac{P_s}{2\a^j}-\kappa^j_s\right)ds | \mathcal{F}_t \right].$$
Therefore, by combining the above expressions with \eqref{equation_q}, we derive $\eqref{rep1}$.
\end{proof}

We now present two alternative existence results for the equilibrium price: the first result requires the coefficients to be deterministic functions and does not allow for the presence of a jump term, but gives existence for arbitrary time horizon, and the second result does not require this additional assumption, but provides existence only on a small interval. 

\begin{theorem} \label{theorem_price}
    Suppose Assumption \ref{state_process} is in force and the following additional conditions hold true:
\begin{itemize}
\item[(i)] The coefficients $\mu$, $\sigma$, $\vartheta^j$, $\iota^j$, $\rho^j$, for $j \in \{1,\ldots,n\}$, are deterministic maps.
\textcolor{black}{\item[(ii)] For $t\in [0,T]$ and $x\in \mathbb R$, let $G(t,x)$ be a real-valued square matrix of size $2n+1$, whose lines are defined as follows: 
\begin{align*}
    G_{i}(t,x)&:=\rho^i_t, \quad &\text{for } 1 \leq i\leq n, \\
    G_{n+1}(t,x)&:=\sigma(t,x) &\\
    G_{n+1+i}(t,x)&:= \vartheta^i(t,x) \quad &\text{for }1 \leq i\leq n.
\end{align*}
There exists $\lambda>0$ such that for all $t \in [0,T]$, $x \in \R$, and for all $u\in \mathbb R^{2n+1}$ with $u\neq 0$, 
$$
u^\top G(t,x)G(t,x)^T u >\lambda\|u\|^2.
$$}
\item[(iii)] For all $j \in \{1,\ldots,n\}$, $\Xi^{j} \equiv 0$.
\end{itemize}    
Then there exists an unique equilibrium price process defined on the interval $[0,T]$.
\end{theorem}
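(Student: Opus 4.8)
The plan is to rewrite the equilibrium of Definition \ref{eq.def} as a single fully coupled Markovian forward--backward SDE (FBSDE) and then apply the four-step scheme for non-degenerate FBSDEs (see \cite{ma1999forward} and its refinements to Lipschitz coefficients). The three extra hypotheses are tailored to this tool: condition (i) makes the coefficients deterministic so that the system is genuinely Markovian, condition (ii) is exactly the uniform ellipticity of the forward diffusion required by the scheme, and condition (iii) removes the jumps that would otherwise spoil the parabolic PDE underlying the decoupling field.

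First I would eliminate the price algebraically. Since the map $L(x):=C(x)+x\sum_i \tfrac{1}{2\a^i}$ of Assumption \ref{state_process}(iv) is increasing and invertible, the $\essinf$ in Definition \ref{eq.def} evaluates to $L^{-1}$ of the clearing level, and the supply--demand balance $D_t = C(P_t) + \sum_j q^j_t$ combined with the representation of $q^{j}_t$ from Lemma \ref{repres} gives
\begin{equation*}
L(P_t) = D_t - \sum_{j=1}^n \Big[\omega^j f^j(t,T)\big(Q^{j,q^j}_t-\overline Q^j_0\big) - F_1^j(t,T)Y^{j,1}_t + F_2^j(t,T)Y^{j,2}_t\Big].
\end{equation*}
Because $L^{-1}$ is Lipschitz and the projection onto $[\underline P,\overline P]$ is $1$-Lipschitz, this expresses $P_t=\Phi(t,\mathcal X_t,\mathcal Y_t)$ as an explicit Lipschitz function of the state, where $\mathcal X_t:=(Q^{1,q^1}_t,\dots,Q^{n,q^n}_t,D_t,\kappa^1_t,\dots,\kappa^n_t)\in\R^{2n+1}$ collects the forward processes and $\mathcal Y_t:=(Y^{j,1}_t,Y^{j,2}_t)_{j=1}^n$ the backward ones. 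Crucially, the conditional expectations of future prices have been absorbed into $\mathcal Y_t$ through Lemma \ref{repres}, so $\Phi$ involves only the current values of the state and no implicit fixed point in $P_t$ remains.

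Substituting $\Phi$ into the dynamics of $Q^{j,q^j}$, $D$, $\kappa^j$ and into the BSDE drivers of $Y^{j,1},Y^{j,2}$ produces the coupled system
\begin{equation*}
d\mathcal X_t = B(t,\mathcal X_t,\mathcal Y_t)\,dt + G(t,\mathcal X_t)\,dW_t,\qquad d\mathcal Y_t = -H(t,\mathcal X_t,\mathcal Y_t)\,dt + \mathcal Z_t\,dW_t,
\end{equation*}
with $\mathcal X_0$ prescribed and $\mathcal Y_T=0$, where $G$ is the matrix of condition (ii). I would then verify the hypotheses of the scheme: $B$ and $H$ are globally Lipschitz in $(\mathcal X,\mathcal Y)$ since $\mu,\sigma,\iota^j,\vartheta^j$ are Lipschitz (Assumption \ref{state_process}), the deterministic factors $f^j,F_1^j,F_2^j$ are bounded on $[0,T]$, and $\Phi$ is Lipschitz; moreover $G$ is bounded, Lipschitz in $\mathcal X$ and, by condition (ii), uniformly elliptic. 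The scheme then yields a decoupling field $\theta(t,x)$, Lipschitz in $x$ uniformly on $[0,T]$ and solving the associated quasilinear parabolic system, from which a unique solution $(\mathcal X,\mathcal Y,\mathcal Z)$ on all of $[0,T]$ is built. Setting $P_t=\Phi(t,\mathcal X_t,\mathcal Y_t)$ then gives the equilibrium price, and uniqueness of the FBSDE solution transfers to uniqueness of the equilibrium.

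The main obstacle is existence on the \emph{whole} interval $[0,T]$ rather than on a short one: a naive contraction argument only closes for small $T$. This is precisely where uniform ellipticity enters, since through the gradient estimate for the decoupling PDE it furnishes an a priori bound on the Lipschitz constant of $\theta$ that is independent of the horizon, allowing local solutions to be glued into a global one. Checking that the substituted coefficients genuinely inherit global Lipschitz continuity (in particular that the clamp onto $[\underline P,\overline P]$ and $L^{-1}$ interact well with the boundedness of $f^j,F_1^j,F_2^j$) and establishing the regularity needed to invoke the PDE step are the technical points that require the most care.
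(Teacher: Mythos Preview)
Your proposal is correct and follows essentially the same route as the paper: rewrite the equilibrium, via Lemma~\ref{repres} and the invertibility of $L$, as a fully coupled Markovian FBSDE with forward state $(Q^{j},D,\kappa^{j})\in\R^{2n+1}$ and backward state $(Y^{j,1},Y^{j,2})\in\R^{2n}$, then invoke a global existence result based on the decoupling field under uniform ellipticity. The only cosmetic difference is that the paper cites Theorem~2.29 of \cite{carmona2016lectures} rather than the Ma--Yong four-step scheme directly, which is the same machinery; also note that $G$ has linear growth rather than being bounded, so you should phrase the verification accordingly.
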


\begin{proof}
The first step is to establish the equivalence between the existence/uniqueness of a {\color{black}market equilibrium in the sense of Definition \ref{eq.def}} and the existence/uniqueness of a solution of a specific coupled FBSDE system. 

{\color{black}
Assume that there exists a market equilibrium price $(P_t)_{0\leq t\leq T} \in \mathcal H^2$ and strategies $(q^j_t)_{0\leq t\leq T}^{j=1,\dots,n}$ which satisfy Definition \ref{eq.def}. By Lemma \ref{repres}, for all $j=1,\dots,n$, 
\begin{align*} 
\begin{cases}
    q^j_t&=\omega^j f^j(t,T) (Q_t^{j,q^j}-\overline{Q}_0^j) +\frac{P_t}{2\a^j}-F_1^j(t,T)Y^{j,1}_t +F_2^j(t,T)Y^{j,2}_t,\,\,  t\in [0,T],  \\
    dY_t^{j,1}&=-e^{-\omega^j t}(\frac{P_t}{2\a^j}-\kappa^j_t) dt + Z^{j,1}_tdW_t,\,\,\, Y_T^{j,1}=0\\
    dY_t^{j,2}&=-e^{\omega^jt}(\frac{P_t}{2\a^j}-\kappa^j_t)dt + Z^{j,2}_tdW_t, \,\,\, Y_T^{j,2}=0.\\
\end{cases}
\end{align*}
}
Observe now that, {\color{black}using Definition \ref{eq.def} and Assumption \ref{state_process}.iv, the equilibrium price} $(P_t)$ can be written, for all $t\in [0,T]$, as 
$$P_t=  L^{-1}\left({D}_t -  \sum_{j=1}^n \omega^j f^j(t,T) (Q_t^{j,q^j}-\overline{Q}_0^j) -F_1^j(t,T)Y^{j,1}_t +F_2^j(t,T)Y^{j,2}_t\right)\wedge \overline{P} \vee \underline{P}.$$
We therefore deduce that there exists a solution of the following FBSDE system:
\begin{align} \label{FBSDE_sys1}
\begin{cases}
        d{D}_t&= \mu(t,{D_t}) dt+\sigma(t,{D_t}) \textcolor{black}{dW_t} \nonumber  \\
      dQ^{j,q^j}_t&= \left[-\left(\omega^jf^j(t,T)(Q_t^{j,q^j}-\overline{Q}_0^j) -F_1^j(t,T)Y^{j,1}_t +F_2^j(t,T)Y^{j,2}_t \right. \right.\notag\\&\qquad\left. \left. + \frac{h(t,(Q_t^j)_j,(\kappa^j_t)_j, {D}_t,(Y^{j,1}_t)_j, (Y^{j,2}_t)_j)}{2\a^j}\right)+\kappa^j_t \right]dt + \r^j_t \textcolor{black}{dW_t} \nonumber \\
    d\kappa_t^j&= \iota^j(t,\kappa_t^j)dt + \vartheta^j(t,\kappa_t^j)\textcolor{black}{dW_t}, \quad \kappa^j_0= 0 \nonumber \\
       dY_t^{j,1}&=-e^{-\omega^jt}\left(\frac{h(t,(Q_t^j)_j,(\kappa^j_t)_j, {D}_t,(Y^{j,1}_t)_j, (Y^{j,2}_t)_j)}{2\a^j}-\kappa_t^j\right) dt + Z^{j,1}_tdW_t \nonumber \\
         dY_t^{j,2}&=-e^{\omega^j t}\left(\frac{h(t,(Q_t^j)_j,(\kappa^j_t)_j, {D}_t,(Y^{j,1}_t)_j, (Y^{j,2}_t)_j)}{2\a^j}-\kappa_t^j\right) dt + Z^{j,2}_tdW_t \nonumber  \\
   Q^{j,q^j}_0&=\overline{Q}^j_0,\,\, Y_T^{j,1}=Y_T^{j,2}=0,
\end{cases}
\end{align}
where 
\begin{align}
&h(t,q^1,\ldots,q^n,\kappa^1,\ldots,\kappa^n,d,y^{1,1},\ldots,y^{1,n},y^{2,1},\ldots,y^{2,n}) \nonumber \\ &:=L^{-1}\left(d-\sum_{j=1}^n \omega^j f^j(t,T) (q^j-\overline{Q}_0^j) -F_1^j(t,T)y^{j,1} +F_2^j(t,T)y^{j,2}\right)\wedge \overline{P} \vee \underline P.
\end{align}

{\color{black}Conversely, if there exists a solution to the fully coupled FBSDE system (\ref{FBSDE_sys1}), we can observe that $P_t:=h(t,(Q_t^{j,q^j})_j,(\kappa^j_t)_j, D_t,(Y^{j,1}_t)_j, (Y^{j,2}_t)_j)$ and the corresponding strategies satisfy Definition \ref{eq.def}. Therefore, we conclude that the existence and uniqueness of a market equilibrium is equivalent to the existence and uniqueness of the solution of the FBSDE system (\ref{FBSDE_sys1}).}

Under Assumption \ref{state_process} and the additional assumptions of this theorem, we can apply  Theorem 2.29 of \cite{carmona2016lectures}  from which we deduce the existence and uniqueness of the solution of the FBSDE system (\ref{FBSDE_sys1}). In view of the above arguments, we can conclude that there exists an unique equilibrium price process defined on the interval $[0,T]$.

\end{proof}

We provide below an alternative existence and uniqueness result, which does not require additional assumptions, for a small time interval.

\begin{proposition}
     Suppose Assumption \ref{state_process} is in force. There exists a constant $\delta>0$  such that there exists an  unique market equilibrium on the interval $[0,\delta]$.
\end{proposition}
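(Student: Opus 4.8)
The plan is to reduce the existence of a market equilibrium on a short interval to the local well-posedness of the fully coupled FBSDE system \eqref{FBSDE_sys1}, and to obtain the latter by a Banach fixed-point argument in which each iteration gains a factor of the horizon length. Exactly as in the proof of Theorem \ref{theorem_price}, Lemma \ref{repres} together with Assumption \ref{state_process}.iv shows that, under Assumption \ref{state_process} alone, a market equilibrium in the sense of Definition \ref{eq.def} on $[0,\delta]$ is equivalent to a solution of \eqref{FBSDE_sys1} posed on $[0,\delta]$; the only difference is that we now retain the jump terms $\int_{\R^\star}\Xi^j(t,e)\widetilde N(dt,de)$ in the forward dynamics of $Q^{j,q^j}$ and the integrands $\Gamma^{j,1},\Gamma^{j,2}$ in the backward equations, since we no longer impose $\Xi^j\equiv 0$. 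It therefore suffices to prove that \eqref{FBSDE_sys1} admits a unique solution in $\mathcal S^2\times\mathcal H^2_{2n+1}\times\mathcal H^2_\nu$ on a sufficiently small interval $[0,\delta]$.

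To this end I would work on the Banach space $\mathcal E_\delta := \mathcal H^2([0,\delta])\times(\mathcal S^2([0,\delta]))^{2n}$, whose generic element is a pair $(P,\mathbf Y)$ with $\mathbf Y=(Y^{j,1},Y^{j,2})_{j=1}^n$, and define a map $\Phi:\mathcal E_\delta\to\mathcal E_\delta$ as follows. The forward processes $(\kappa^j)_j$ and $D$ solve autonomous SDEs and are fixed once and for all by Assumption \ref{state_process}.ii--iii. Given an input $(P^{\mathrm{in}},\mathbf Y^{\mathrm{in}})$, I would first solve the forward SDEs for $(Q^{j,q^j})_j$ using \eqref{rep1}, whose drift depends on $Q^{j,q^j}$, on $P^{\mathrm{in}}$ and on $\mathbf Y^{\mathrm{in}}$, while the diffusion and jump coefficients $\rho^j,\Xi^j$ are exogenous; unique solvability is standard since $f^j,F_1^j,F_2^j$ are bounded continuous functions on $[0,T]$ and the coefficients are Lipschitz by Assumption \ref{state_process}.i. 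Next I would solve the two linear BSDEs for $\mathbf Y^{\mathrm{out}}$, whose drivers $-e^{\mp\omega^j t}(P^{\mathrm{in}}_t/2\a^j-\kappa^j_t)$ do not depend on $(Y,Z,\Gamma)$, so that $Y^{j,i,\mathrm{out}}_t=\E[\int_t^\delta e^{\mp\omega^j s}(P^{\mathrm{in}}_s/2\a^j-\kappa^j_s)\,ds\,|\,\mathcal F_t]$ and the integrands $(Z^{j,i},\Gamma^{j,i})$ are recovered by martingale representation in $\mathcal H^2_{2n+1}\times\mathcal H^2_\nu$. Finally I would set $P^{\mathrm{out}}_t:=h(t,(Q^{j,q^j}_t)_j,(\kappa^j_t)_j,D_t,(Y^{j,1,\mathrm{out}}_t)_j,(Y^{j,2,\mathrm{out}}_t)_j)$ and put $\Phi(P^{\mathrm{in}},\mathbf Y^{\mathrm{in}}):=(P^{\mathrm{out}},\mathbf Y^{\mathrm{out}})$. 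A fixed point of $\Phi$ is precisely a solution of \eqref{FBSDE_sys1}, and $h$ is globally Lipschitz because $L^{-1}$ is Lipschitz with constant $C$ by Assumption \ref{state_process}.iv and $x\mapsto x\wedge\overline P\vee\underline P$ is $1$-Lipschitz.

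The crux is that the coupling variables $(P,\mathbf Y)$ enter only through drift and driver terms, never through a diffusion or jump coefficient, so each of the three steps contributes a factor $\delta$ to the stability estimates. For two inputs, the diffusion and jump parts of the $Q^{j,q^j}$ dynamics are identical and cancel in the difference, leaving a pure drift integral; Cauchy--Schwarz in time and Gronwall's lemma then give $\|Q^{j,q^j}-Q'^{j,q^j}\|_{\mathcal S^2}^2\le C\delta\,(\|P^{\mathrm{in}}-P'^{\mathrm{in}}\|_{\mathcal H^2}^2+\|\mathbf Y^{\mathrm{in}}-\mathbf Y'^{\mathrm{in}}\|_{(\mathcal S^2)^{2n}}^2)$. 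For the backward step, Doob's inequality applied to the conditional-expectation representation yields $\|\mathbf Y^{\mathrm{out}}-\mathbf Y'^{\mathrm{out}}\|_{(\mathcal S^2)^{2n}}^2\le C\delta\,\|P^{\mathrm{in}}-P'^{\mathrm{in}}\|_{\mathcal H^2}^2$. Finally, the Lipschitz property of $h$ and the elementary bound $\|X\|_{\mathcal H^2([0,\delta])}^2\le \delta\,\|X\|_{\mathcal S^2}^2$ give $\|P^{\mathrm{out}}-P'^{\mathrm{out}}\|_{\mathcal H^2}^2\le C\delta\,(\sum_j\|Q^{j,q^j}-Q'^{j,q^j}\|_{\mathcal S^2}^2+\|\mathbf Y^{\mathrm{out}}-\mathbf Y'^{\mathrm{out}}\|_{(\mathcal S^2)^{2n}}^2)$. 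Chaining these inequalities shows $\Phi$ is Lipschitz on $\mathcal E_\delta$ with constant bounded by $C\delta$ (up to the harmless factor $e^{C\delta}$), hence a contraction for $\delta$ small; the Banach fixed-point theorem then provides the unique solution of \eqref{FBSDE_sys1}, and thus the unique equilibrium, on $[0,\delta]$.

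The main obstacle, and the reason the result is only local whereas Theorem \ref{theorem_price} is global, is precisely that the contraction constant is proportional to $\delta$: over a long horizon the forward--backward coupling need not contract, and without the non-degeneracy and determinism hypotheses (ii)--(iii) of Theorem \ref{theorem_price} one cannot invoke the global monotonicity-type result of \cite{carmona2016lectures}. Care must also be taken to organise the norms so that every leg of $\Phi$ genuinely carries a factor $\delta$, which hinges on the structural fact that $(P,\mathbf Y)$ enter no diffusion or jump coefficient; the same fact explains why the exogenous jump term $\Xi^j$, though present, cancels in the differences and does not spoil the estimate, so that the assumption $\Xi^j\equiv 0$ of Theorem \ref{theorem_price} can be dropped here.
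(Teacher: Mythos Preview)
Your reduction of the equilibrium to the fully coupled FBSDE system \eqref{FBSDE_sys1} on $[0,\delta]$ is exactly the paper's first step, and your treatment of the jump terms is correct. The only difference lies in how the small-time well-posedness of \eqref{FBSDE_sys1} is obtained: the paper simply invokes Theorem~3.2 of \cite{li2014lp}, a general existence/uniqueness result for fully coupled FBSDEs with jumps on a small interval, whereas you reprove this for the case at hand via a Banach fixed-point argument on $(P,\mathbf Y)$. Your argument is sound---the key structural observation that the coupling enters only through drift and driver, so that the diffusion and jump parts cancel in the differences and each leg of $\Phi$ gains a factor $\delta$, is precisely what makes the contraction go through---and it yields a more self-contained and transparent proof tailored to the present system. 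The paper's route is shorter but relies on an external black box; your route explains \emph{why} the short-time result holds here, at the cost of spelling out the estimates.
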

The proof  follows by the same arguments used in the proof of Theorem \ref{theorem_price} combined with Theorem 3.2 of \cite{li2014lp} (which gives the existence and uniqueness of the solution of a fully-coupled FBSDE with jumps in a small time interval).


\section{Short-term price impact of storage} \label{shortterm.sec}




In this section, we illustrate the theoretical findings of this paper with numerical simulations. We first consider the deterministic toy model of Section \ref{toy.sec} and then the stochastic model of Section \ref{section_price_sto}.
\textcolor{black}{Both models are calibrated} using French data from ENTSOE\footnote{https://transparency.entsoe.eu/}.
 


Throughout this section, we assume that there is no external production providing energy to the storage directly and the supply function of the conventional producers is linear: $C(P) = C_0 + C_1 P$, where $C_0$ represents the baseline net supply.

The energy demand is considered entirely exogenous, so it will remain the same across different storage scenarios. Furthermore, the plotted variable is not the energy demand itself but the residual demand, that is the demand minus the renewable production.
Finally, to align with Remark \ref{remark_param_agent}, the time horizon is always considered to be $24$ hours.

\paragraph{Calibration of $C(P)$}
{Recall that the function $C(P) = C_0 + C_1 P$ corresponds to the net residual supply of conventional producers to the intraday market (defined as the difference between the total supply of conventional producers and the day-ahead forecast of this quantity). We make the simplifying assumption that the consumers do not operate in the intraday market.
$C(P)$ is calibrated in two steps. Firstly, the intercept $C_0$ is taken equal to the minimal value of the intraday residual demand during the estimation period (January 1st, 2024 to March 30th, 2024), yielding $C_0=-7546$MW.  The negative sign of this value is due to an excess of renewable energy production beyond initial projections.}

{Secondly, for calibrating the coefficient $C_1$, we use merit order curve data obtained from the French day-ahead electricity market.
Figure \ref{bidoffer.fig} shows the aggregated bid and offer curves. The volume corresponding to the offers to the left of the intersection between the two curves is sold in the day-ahead market. We then conjecture that the volume corresponding to the offers to the right of the intersection will be offered in the intraday market of the corresponding hour. We see that the intersection is located within a quasi-linear section of the offer curve, and this is also the case for other dates and hours not shown here. We estimate the slope of the linear section of the curve for all days and all hours from January 1st, 2016 to October 5, 2017.\footnote{A different estimation period is used for reasons of data availability.} The mean value of the coefficient is $151.77$ MWh/Eur with a standard deviation of  $108.03$ MWh/Eur.}

\begin{figure}[H]
\centerline{\includegraphics[width=\textwidth]{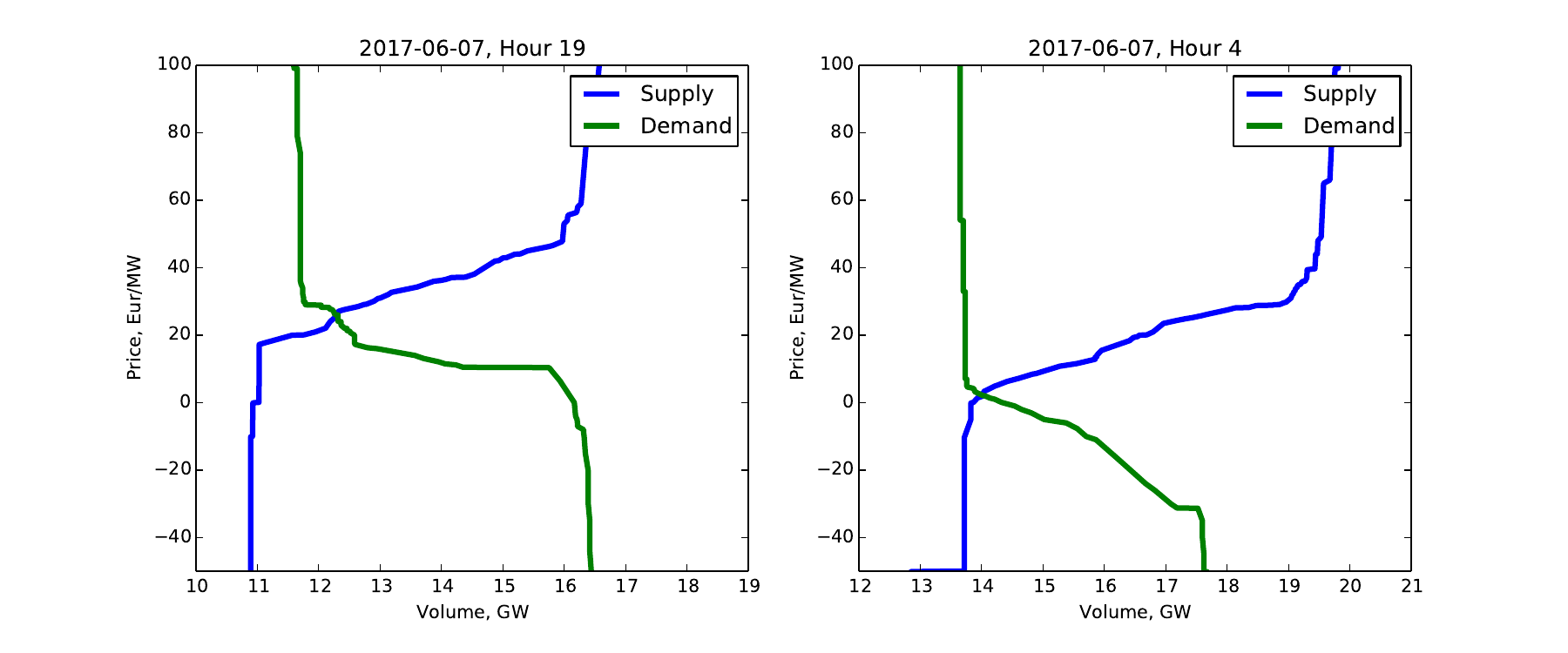}}
\caption{Aggregated bid and offer curves in the French intraday electricity market for a peak hour (left) and off-peak hour (right). Data source: EPEX Spot.}
\label{bidoffer.fig}
\end{figure}

\paragraph{Deterministic framework}
We begin by placing ourselves within the deterministic framework outlined in Section \ref{toy.sec}.  In this toy setting, we have an explicit expression of both the electricity price and the storage agent's strategy at each time point. The objective here is to understand the role of storage in a context where there is no impact due to the randomness of renewable production or demand. {\color{black} 
We model the excess intraday demand (difference between the actual demand and the day-ahead demand forecast), denoted by $D_t$, using a periodic function of the following form:
\begin{equation*}
  D_t=\theta_1 \sin(\theta t)+ D_0.
\end{equation*}
We use this form for the intraday demand with two artificial peak/off-peak periods in order to highlight the potential impact of storage on price variations.
The parameters are calibrated using the data from the French intraday market from January 1st, 2024 to March 30th, 2024: $ D_0=1500$MW is the average of the minimum and the maximum value of residual demand over this period, $\theta_1=6862.5$MW is equal to one half of the difference between the maximum and the minimum value of residual demand, and we fix $\theta=\pi/6$ (hours$^{-1}$) to include 2 peak/off-peak periods within a day.} 

{\color{black}Figure \ref{fig:deter_3_scenarios} illustrates the injection rate and the price impact of a 10GWh storage (\textcolor{black}{$p=10000$} identical agents having parameters $\alpha=84,\beta=7, \gamma=500$ as described in Corollary \ref{representative}). The left graph plots the residual demand and the injection / withdrawal rate, which exhibits a periodic  pattern, similar to that of the residual demand but with a smaller amplitude.  During periods of high energy demand, the storage agent sells surplus energy to compensate the residual demand, while adopting the reverse strategy during off-peak times by storing excess energy for future use. This can be seen as an arbitrage strategy, wherein energy is stored during low-price periods and sold during high-price periods. The right graph illustrates the impact of energy supply of storage assets on the market price: the presence of storage reduces the differences between peak and off-peak prices: the greater the amount of storage introduced into the market, the more narrow the price interval becomes, leading to reduced peak prices and higher off-peak prices. }



\begin{figure}[htbp] 
    \centering 
    \centerline{Electricity demand and withdrawal rate \hspace{2.7cm} Electricity price \hspace{0.7cm}${}$}
    \includegraphics[width=\textwidth]{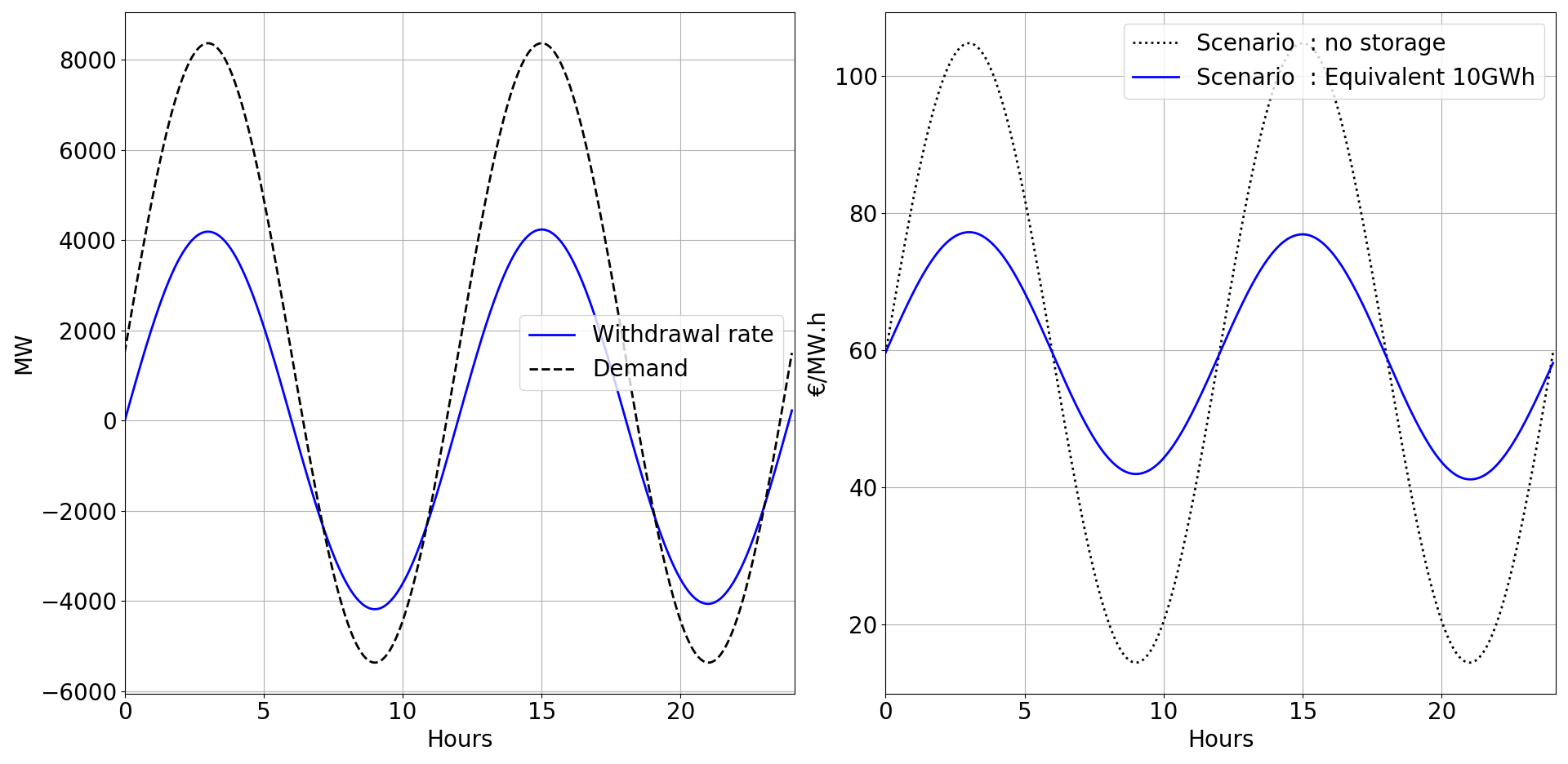}
    \caption{Left: intraday demand and the corresponding storage withdrawal rate for a 10GWh storage;  Right: corresponding electricity prices. }
    \label{fig:deter_3_scenarios}
\end{figure}

\paragraph{Stochastic framework} 
In the stochastic framework, from the proof of Theorem \ref{theorem_price}, the price process is related to the unique solution of a forward-backward stochastic differential equation. To solve this equation, we shall use the numerical scheme of \cite{bender_time_2008}. Below is a brief overview of the scheme adapted to our case. Introduce the following FBSDE on $[0,T]$:
\begin{equation}    
\begin{cases}
 \displaystyle  X_t=x+ \int_0^tb(s,X_s,Y_s)ds+\int_0^t \sigma(s,X_s) dW_s \\
 \displaystyle    Y_t=\int_t^T f(s,X_s,Y_s)ds - \int_t^T Z_s dW_s
\end{cases}
\end{equation}
 with $b,\s $ and $ f$ deterministic functions satisfying appropriate assumptions (see e.g. \cite{carmona2016lectures} and \cite{bender_time_2008}) ensuring that the above FBSDE admits a unique solution on $[0,T]$. In this case, the solution of this FBSDE can be characterised through a decoupling field, a deterministic function $u$ such that $X$ and $Y$ are connected through the following formula : 
 \begin{equation*}
     Y_t=u(t,X_t).
 \end{equation*}
 The numerical scheme for a time interval divided into $n$ segments can be expressed as follows, with $u_i^{n,0}=0$, at the $m^{th}$ iteration: 
 \begin{equation}
     \begin{cases}
        X_0^{n,m}=x \\ 
        X_{i+1}^{n,m}= X_{i}^{n,m} + b(t_i, X_{i}^{n,m},u_i^{n,m-1}( X_{i}^{n,m}))h + \s (t_i, X_{i}^{n,m}) \Delta W_{i+1},\\
        Y^{n,m}_n=0, \\
        Z^{n,m}_i=\frac{1}{h} \E[Y_{i+1}^{n,m} \Delta W_{i+1} | \mathcal{F}_{t_i}] \\
        Y_i^{n,m} = \E[Y_{i+1}^{n,m}+ f(t_i,X_i^{n,m},Y_{i+1}^{n,m})h| \mathcal{F}_{t_i}] \\
        u_i^{n,m}(X_i^{n,m})=Y_i^{n,m}
        
     \end{cases}
 \end{equation}
 with $h=\frac{T}{n}$.
The conditional expectations are then computed with least squares regression.
In this section, we will use $\Lambda=50000$ simulated trajectories and \textcolor{black}{$n=480$} time steps.
In this stochastic setting, we model the intraday residual demand $(D_t)$, defined as the difference between the actual demand and the day-ahead demand forecast, and intraday renewable production $(R_t)$, defined similarly as the difference between the actual renewable production and the day-ahead renewable production forecast, by independent Ornstein-Uhlenbeck processes following the procedure in Appendix \ref{calibration}. This estimation is done in the period from January 1st to March 30th, 2024 and  yields the following estimates: 

\begin{align*}
        &\mu^{\Tilde D}=8.09 \text{MW}, &&\theta^{\Tilde D}=316 h^{-1}, &&\sigma^{\Tilde D}=35615 \text{MW}/h^{1/2}, \\
        &\mu^R=7.58\text{MW}, &&\theta^R=176 h^{-1}, &&\sigma^R= 26219\text{MW}/h^{1/2}.
\end{align*}


Similarly to our approach in the deterministic toy model, we examine the influence of changing the quantity of storage in the market on the electricity price dynamics. 
We consider two different scenarios: scenario $1$ with no storage, compared to scenario $2$ with $p=10000$ identical agents with parameters $\alpha=84,\beta=7,\gamma=500,\rho=0.01$ representing an equivalent storage capacity of 10 GWh as explained in Remark \ref{remark_param_agent}.
{\color{black}Figure \ref{fig:sto_2}, left graph shows a simulated trajectory of residual demand together with the corresponding injection / withdrawal trajectory, and the right graph shows the intraday price evolution with and without storage. In line with the deterministic case, we observe a reduction of price differences, where the abrupt price fluctuations due  to shocks in residual demand are smoothed by increased storage capacity.} 

\begin{figure}[H] 
    \centering
\centerline{Electricity demand and withdrawal rate \hspace{2.7cm} Electricity price \hspace{0.7cm}${}$}
    
    \includegraphics[width=\textwidth]{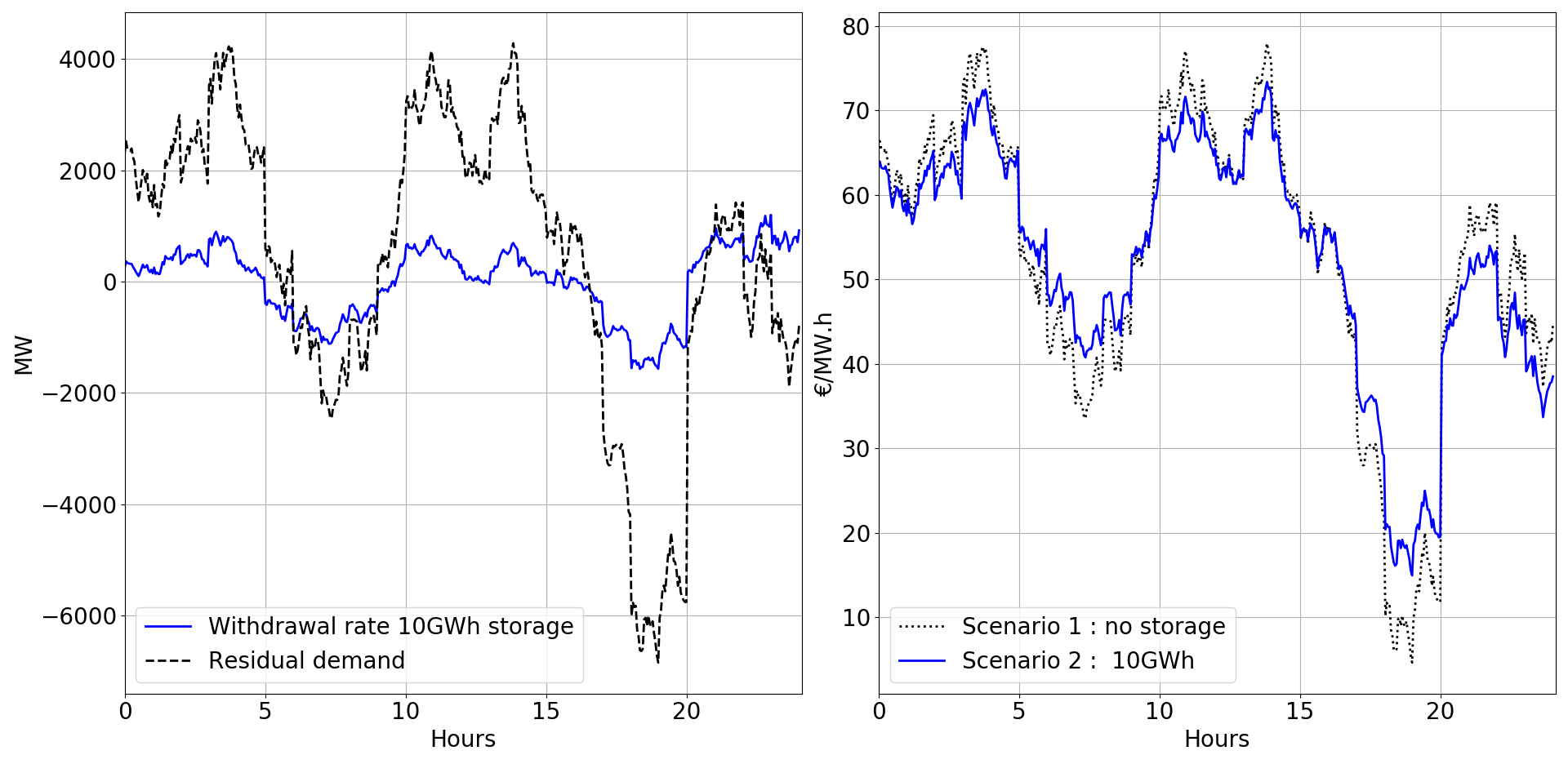}
    \caption{On the left, intraday residual demand with the withdrawal rate of a 10GWh storage. On the right, electricity prices in two scenarios with or without storage}
    \label{fig:sto_2}
\end{figure}


To characterize the impact of storage on electricity price volatility, we consider the same representative agent as in the previous paragraph and plot in Figure \ref{fig:vol} the evolution of the average volatility {(computed over 50000 simulations)} depending on the number of identical storage agents in the market.
A strong decreasing trend is visible, showing that \textcolor{black}{storage agents} stabilize market prices not only in terms of reducing price differences but also in terms of lowering price volatility. 

\begin{figure}[H] 
    \centering
    \includegraphics[width=\textwidth]{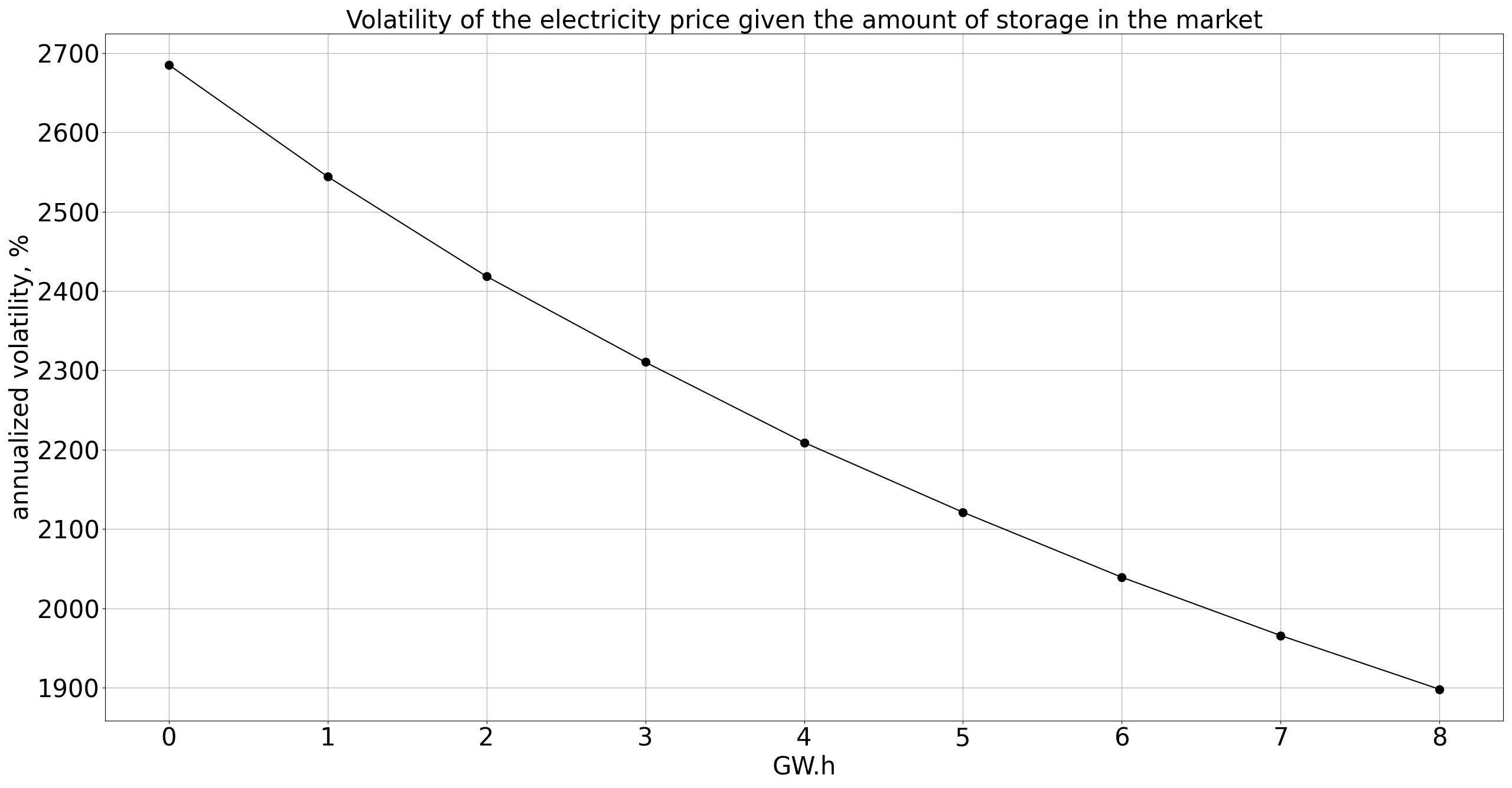}
    \caption{Evolution of the volatility given the amount of storage in the market}
    \label{fig:vol}
\end{figure}

\section{Long-term impact and profitability of storage}\label{longterm.sec}
A potential obstacle to the increasing deployment of energy storage is the ``cannibalization effect'': as new storage agents enter the market, their price impact which, as we have seen, leads to lower price differences and lower price volatility, will reduce their own revenues. 
\begin{figure}[H]
    \centering
    \includegraphics[width= \textwidth]{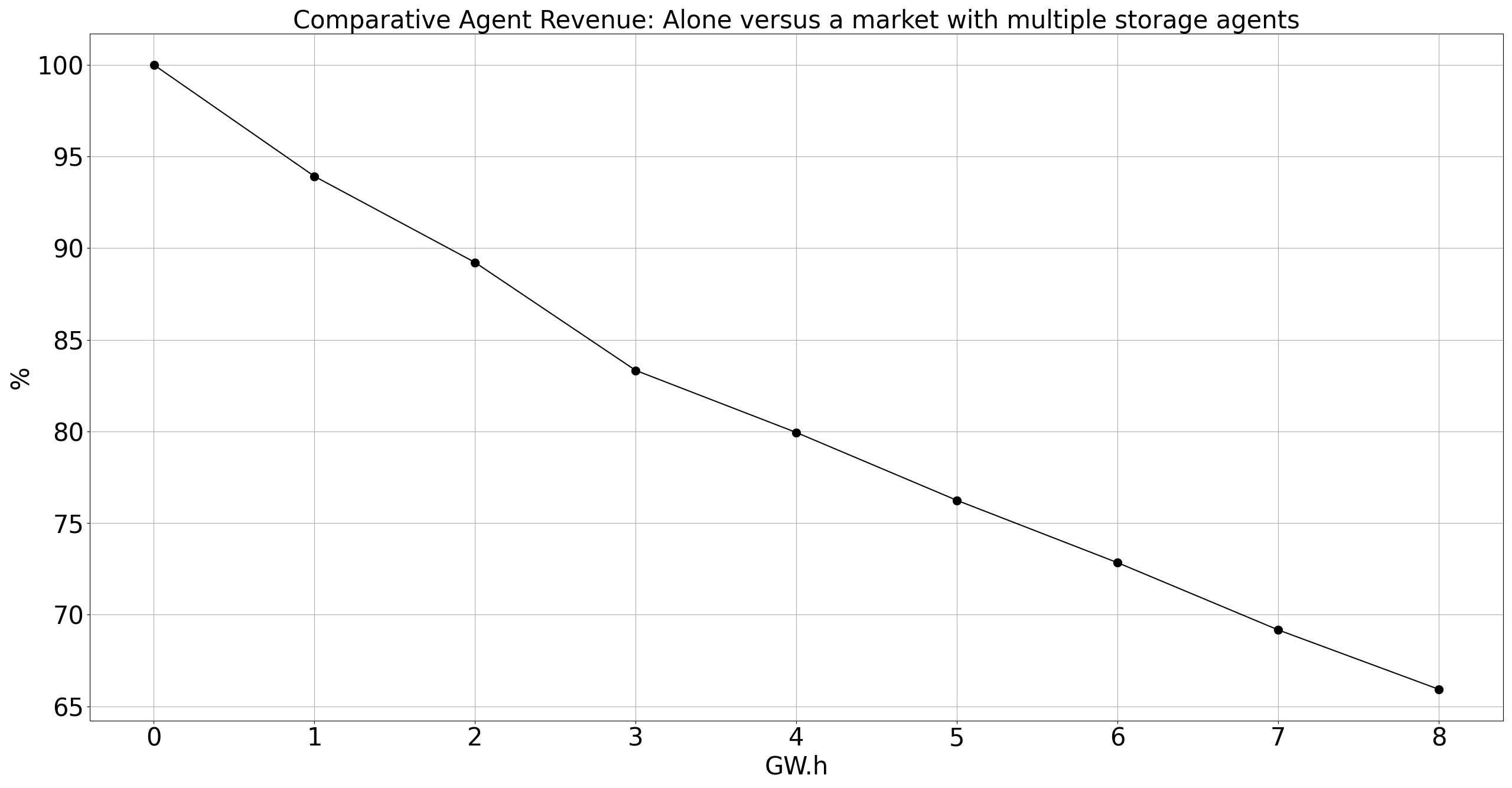}
    \caption{Ratio of the average daily revenue of an agent (1\text{MWh}) in the presence of additional storage to the average daily revenue in absence of additional storage, as function of the additional storage capacity.}
    \label{fig:agent_revenue}
\end{figure}

{\color{black} Figure \ref{fig:agent_revenue} plots the ratio of the average daily revenue (over 50000 trajectories, each representing one day) of a single storage agent  with 1\text{MWh} capacity in the presence of additional storage, to the average daily revenue of the same agent in absence of additional storage, as function of the additional storage capacity. As expected, increasing storage penetration leads to a considerable drop in daily revenues due to a reduction in price differences and overall arbitrage potential. This potential drop in revenues can discourage investment into storage capacity. However, this negative impact can be counter-balanced by a positive impact of increasing
penetration of intermittent renewable energies, which leads to higher price volatility, higher price
differences and potentially higher revenues for storage agents.}

{\color{black}To understand whether
the gains for storage agents related to increased renewable penetration will compensate
the losses due to their price impact in the long term, we consider the reference scenarios
for the evolution of the French electricity sector published by RTE, the French electricity
network operator.} In 2020, RTE published 6 reference scenarios\footnote{See \href{https://rte-futursenergetiques2050.com/}{rte-futursenergetiques2050.com}}  describing possible pathways for France towards carbon neutrality by 2050. These scenarios describe the evolution of production capacity (nuclear, renewable etc.) and provide benchmark values for 2030, 2040, and 2050. Each scenario is then tailored to three possible trajectories of energy demand. In this paragraph, we examine the evolution of revenues for a storage agent in every scenario. We will focus on the trajectory of increasing electrical consumption as outlined by RTE's reference scenario. 
{Figure \ref{fig:RTE_narrative} describes the narratives and main assumptions of the six scenarios we study}. The storage capacity in each scenario, and the multiplication coefficients for the renewable production and the energy demand are given  in Appendix \ref{RTE_coef}. }

Our objective is to quantify the impact, for each scenario, of the increase in the volume of available storage on the revenue of an individual storage agent, assuming that renewable energy production grows according to the same scenario. To this end, we make four additional assumptions: 
\begin{itemize}
    \item An increase in renewable production (respectively the energy demand) in a given scenario leads to a proportional increase in the production (respectively the energy demand) available  in the intraday market.
    \item The values found in the calibration of $C(.)$ remain the same. This can be justified by the long-term presence of gas power plants for short-term production adjustments.
    \item We consider that $25\%$ of the installed storage capacity is dedicated to the intraday market.\footnote{According to EPEX Spot (\href{https://www.epexspot.com/en/news/all-time-high-volumes-growth-spot-markets-illustrate-trust-trading-participants}{epexspot.com/en/news/all-time-high-volumes-growth-spot-markets-illustrate-trust-trading-participants}), in 2023, a total of 717.8 TWh was traded on EPEX SPOT, out of which 175.7 TWh was traded on the Intraday segment. We assume that the fraction of storage assets operating in the intraday market is similar.}
    \item We assume a one-to-one correspondence between the nameplate capacities of storage assets in MW given by RTE and storage capacities in MWh.\footnote{The ratio between the two can vary in practice, but, for example, Hornsdale Power Reserve has nameplate capacity of 150 MW and storage capacity of 194 MWh (\href{https://hornsdalepowerreserve.com.au/}{hornsdalepowerreserve.com.au}), so our assumption seems reasonable.}
\end{itemize}

\begin{figure}[H]
    \centering
    \includegraphics[width=\textwidth]{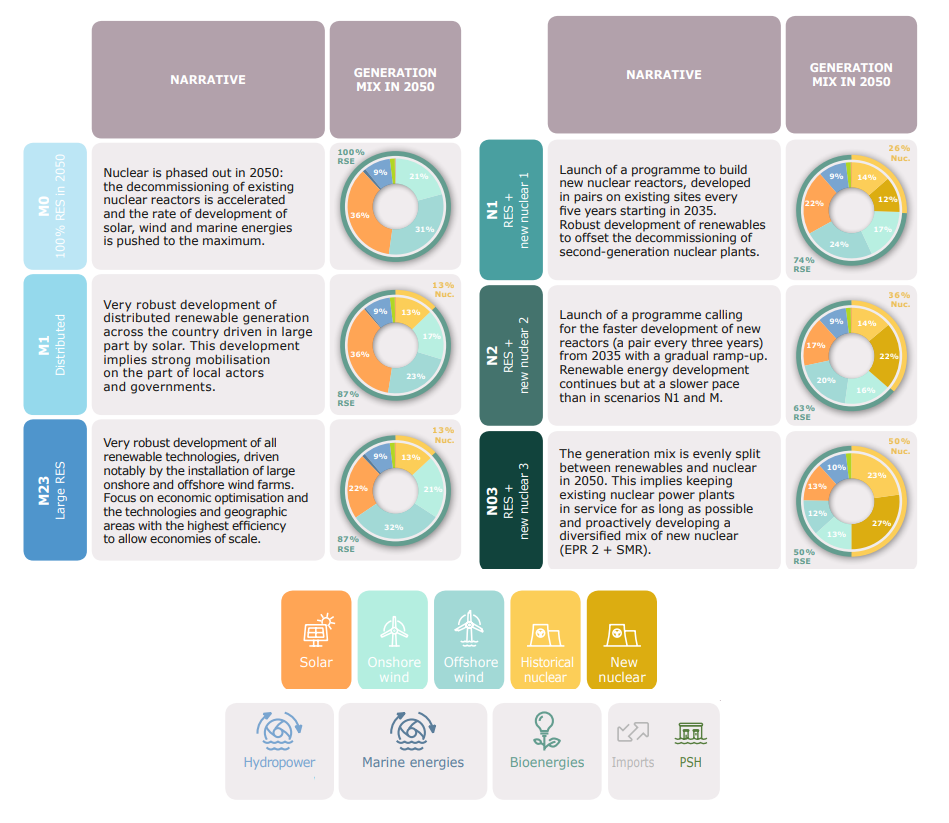}
    \caption{Brief description of every RTE scenario. Source: ``Energy Pathways to 2050'', \href{https://www.rte-france.com/analyses-tendances-et-prospectives/bilan-previsionnel-2050-futurs-energetiques}{rte-france.com/bilan-previsionnel-2050-futurs-energetiques}.}
    \label{fig:RTE_narrative}
\end{figure}

{{Figure \ref{fig:RTE_rev_quant} shows the quantiles of the  daily revenue of a 1MWh storage agent (computed over 50000 typical days) in each RTE scenario during the period from 2019 to 2050, and Figure \ref{fig:RTE_rev_mean} shows the mean revenue (over 50000 typical days) in each scenario during the same period}}.  We observe a continuous increase in mean revenues, so that a potential cannibalization effect is compensated by the  increase in the energy demand and the renewable production of the French energy mix. However, the interquantile range also increases in every scenario, highlighting the growing uncertainty of future revenues due to an increased proportion of renewable production. {To explore the long-term impact of storage on electricity prices, { Figure \ref{fig:RTE_vol_quant} shows the quantiles of electricity price volatility over 50000 days in the RTE scenarios, and Figure \ref{fig:RTE_vol_mean} shows the mean value (over 50000 typical days) of volatility over the same period in each scenario}. We observe that volatility increases due to the growth of renewable production, leading, for example, to a volatility peak in 2050 that is, on average, $3.7$ times higher than its value in 2019 in the M0 and M1 scenarios. In contrast, in the N0 scenario, the volatility is multiplied on average by $2.7$. This numerical application illustrates a drawback of scenarios with too many renewables and highlights one of the reasons why more storage is required as the share of renewables increases.}


\begin{figure}[H]
    \centering
    \includegraphics[width=\textwidth]{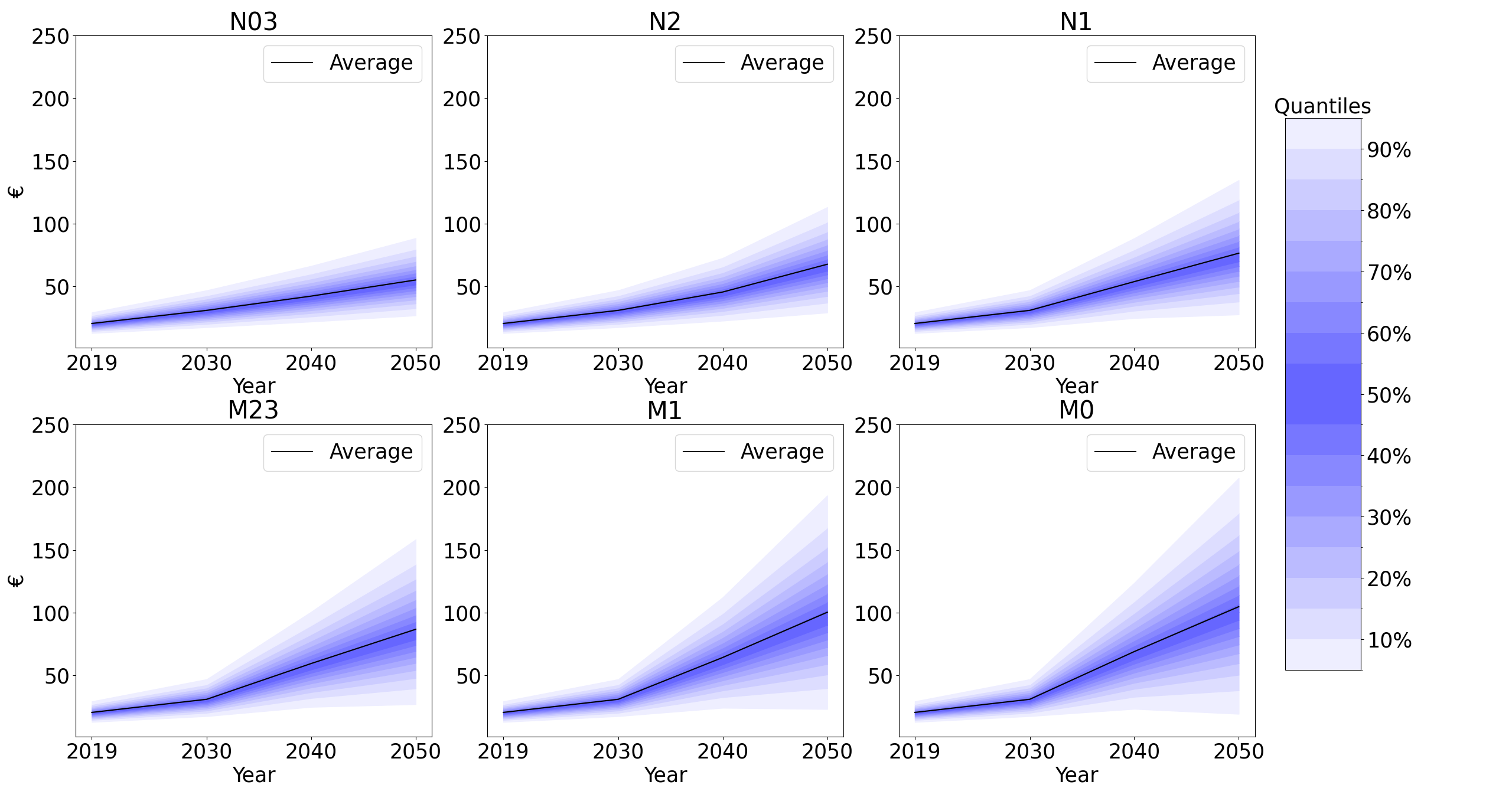}
    \caption{Evolution of the quantiles of the revenues of a 1MWh storage agent in the RTE scenarios}
    \label{fig:RTE_rev_quant}
\end{figure}

\begin{figure}[H]
    \centering
    \includegraphics[width=\textwidth]{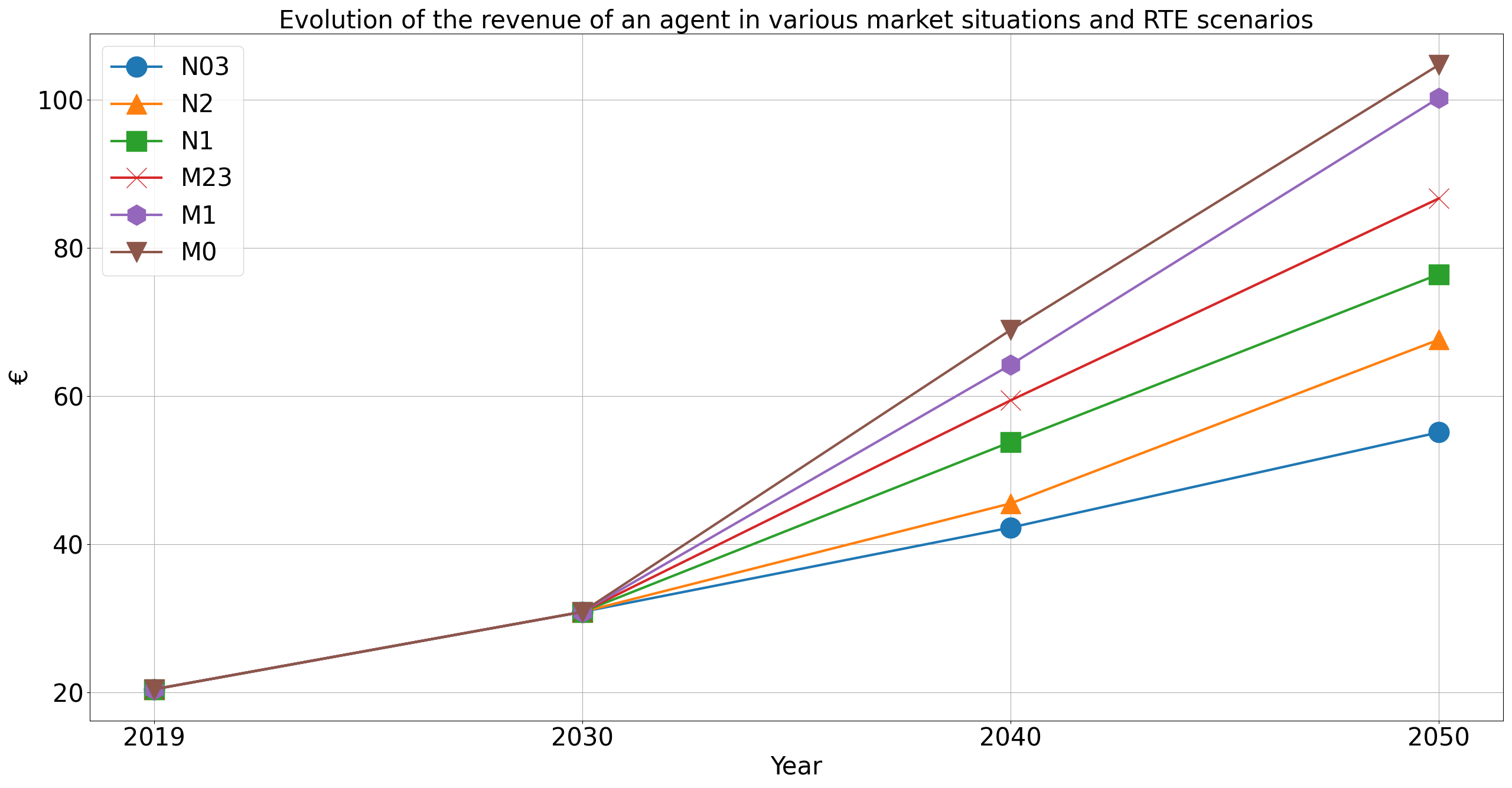}
    \caption{Evolution of the revenues of a 1MWh storage agent in the RTE scenarios}
    \label{fig:RTE_rev_mean}
\end{figure}

\begin{figure}[H]
    \centering
    \includegraphics[width=\textwidth]{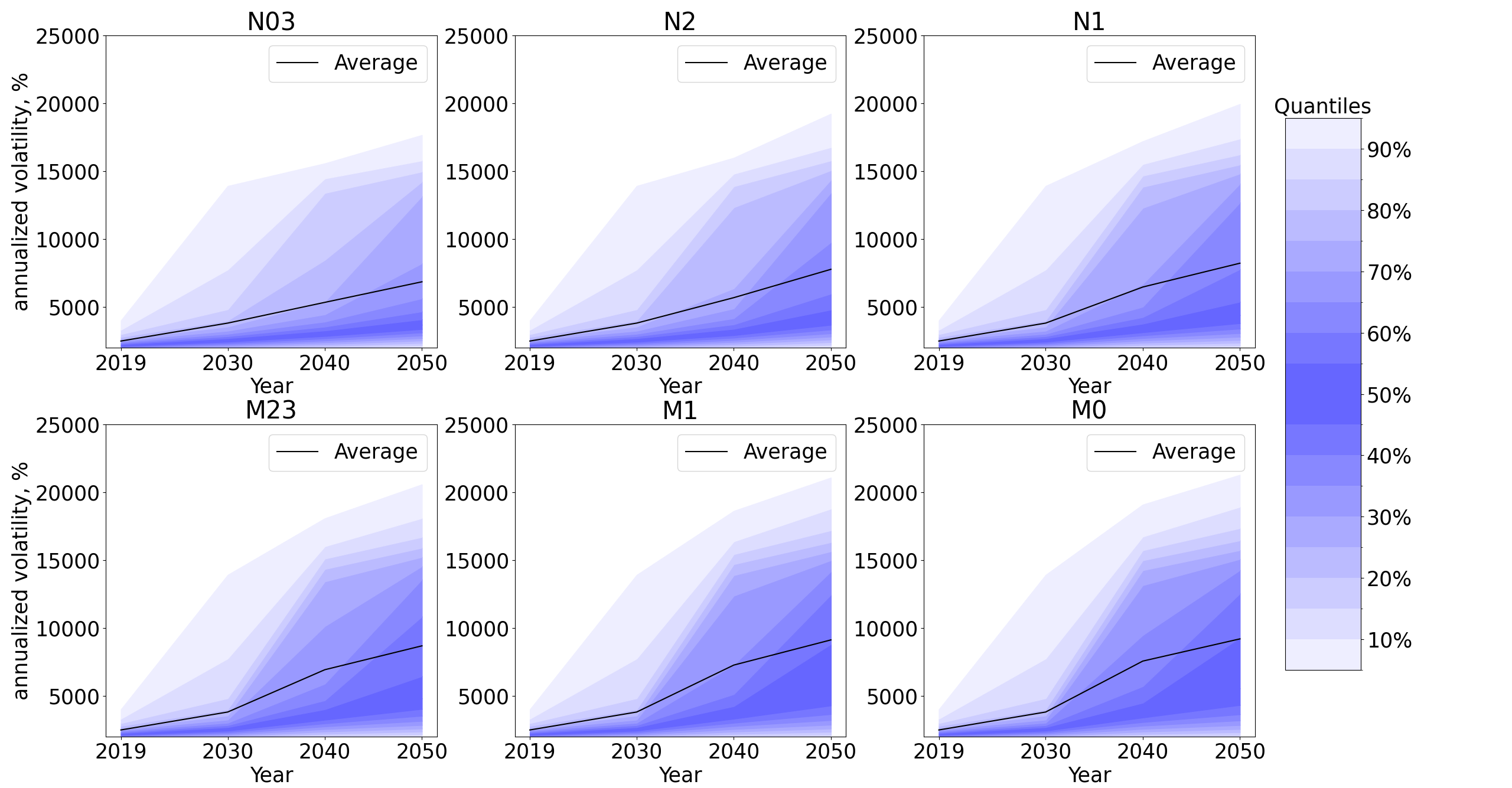}
    \caption{Evolution of the quantiles of electricity prices volatility in the RTE scenarios}
    \label{fig:RTE_vol_quant}
\end{figure}

\begin{figure}[H]
    \centering
    \includegraphics[width=\textwidth]{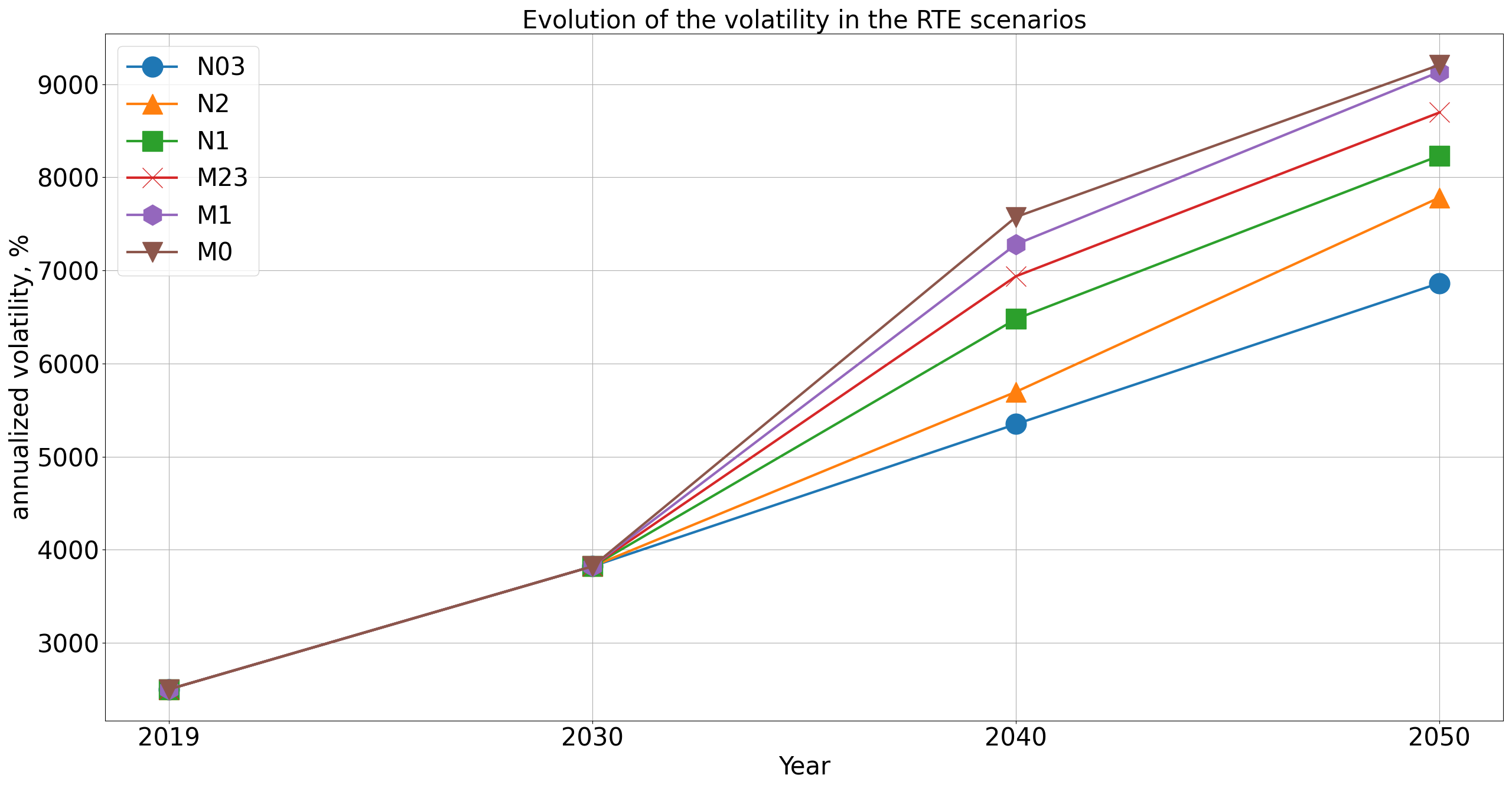}
    \caption{Evolution of the average electricity prices volatility in  RTE scenarios}
    \label{fig:RTE_vol_mean}
\end{figure}

\appendix

\section{Proof of Theorem \ref{main}}
\label{proofmain.sec}
The first step is to rewrite the optimization problem in a more convenient form. 
Observe first that the problem \eqref{Optimal} is equivalent to the following one:
\begin{align}\label{P1}
 \underset{q \in \mathcal{A}}{\inf} \mathbb{E}\left[\int_0^{T} \frac{\a}{2}\left(q_t-\frac{P_t}{2\a}\right)^2 +\frac{\beta}{2}\left(Q^q_t-\overline{Q}_0\right)^2 dt+\frac{\gamma}{2}\left(Q^q_T-\overline{Q}_0\right)^2\right]
\end{align}
For a given $q_t \in \mathcal{A}$, we define, for all $t \in [0,T]$,
\begin{equation}\label{change_var}
    a_t:=q_t-\frac{P_t}{2\a};\,\, dQ^a_t=\left(-a_t-\frac{P_t}{2\a}+\kappa_t\right)dt + \rho_t dW_t+ \int_{\mathbb{R}^{\star}} \Xi(t,e)\widetilde N(dt,de).
\end{equation}
Since $(P_t)\in \mathcal{H}^2$, the new control process $(a_t)$ also belongs to the set $\mathcal{A}$.
Problem \eqref{P1} can be then transformed into 
\begin{align}\label{opt}
\underset{a \in \mathcal{A}}{\inf}J(0,\overline Q_0,a) 
\end{align}
with
\begin{equation}
    J(0,\overline Q_0,a) :=\mathbb{E}\left[ \int_0^T\frac{\a}{2} a_t^2 dt + \int_0^T \frac{\beta}{2}\left(Q^a_t-\overline{Q}_0\right)^2 dt+\frac{\gamma}{2}\left(Q^a_T-\overline{Q}_0\right)^2\right].
\end{equation} 
By strict convexity and coercivity of the functional $J(0,\bar{Q}_0,\cdot)$, we deduce the existence and the uniqueness of a minimizer for \eqref{opt}.
To obtain the explicit representation of the optimal control, we first establish the following lemma.
\begin{lemma} \label{lemma}
 Let  $  a^\star \in \mathcal{A}$ be the optimal control minimizing $\eqref{opt}$. Then the following forward-backward stochastic differential equation with jumps (FBSDEJ) admits an unique solution 
  $(Y^{a^\star},Z^{a^\star},\Gamma^{a^\star},H^{a^\star}) \in \mathcal{S}^2 \times \mathcal{H}^2 \times \mathcal{H}^2_{\nu}\times \mathcal M^{\perp}$: 
 \begin{equation} \label{FBSDEJ_lemma}
    \begin{cases}
    \displaystyle dQ_t^{a^\star}=\left(-a^\star_t-\frac{P_t}{2\a}+\kappa_t\right)dt + \rho_t dW_t + \int_{\mathbb{R}^{\star}} \Xi(t,e)\widetilde N(dt,de), \\
       \displaystyle dY^{a^\star}_t=-\frac{\b}{2}[Q^{a^\star}_t-\overline{Q}_0]dt + Z^{a^\star}_tdW_t + \int_{\R^{\star}}\Gamma^{a^\star}(t,e) \widetilde{N}(dt,de)+H^{a^\star}_t, \\
      \displaystyle  Y_T=\frac{\gamma}{2}Q_T^{a^\star} - \frac{\gamma}{2}\overline Q_0.
    \end{cases}
\end{equation} 
and  the following condition holds : 
\begin{equation} \label{optimality_condition}
     \frac{\a}{2} a^\star_t  -Y_t^{ a^\star} = 0,\ \forall t \in [0,T], \, \text{a.s.}
\end{equation}
Conversely, assume that there exists $(a^\star, Q^{a^\star}, Y^{a^\star}, Z^{a^\star},\Gamma^{a^\star},H^{a^\star}) \in \mathcal{A} \times \mathcal{S}^2 \times \mathcal{S}^2 \times \mathcal{H}^2 \times \mathcal{H}^2_\nu\times \mathcal M^\perp$ satisfying \eqref{FBSDEJ_lemma}-\eqref{optimality_condition}, then $a^\star$ is the optimal control minimizing $\eqref{opt}$.

\end{lemma}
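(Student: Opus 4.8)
The plan is to combine standard linear BSDE theory (to produce the adjoint process) with a convex-duality / Pontryagin-type variational argument (to obtain and then exploit the first-order condition), treating the two implications in turn but through a single integration-by-parts identity between the control perturbation and the adjoint $Y$. For the necessity direction I fix the optimal $a^\star$. Since $a^\star\in\mathcal A\subset\mathcal H^2$ and $P,\kappa,\rho\in\mathcal H^2$, $\Xi\in\mathcal H^2_\nu$, standard SDE estimates give $Q^{a^\star}\in\mathcal S^2$, so the equation for $(Y,Z,\Gamma,H)$ is a \emph{linear} BSDE whose driver $-\frac\beta2(Q^{a^\star}_t-\overline Q_0)$ and terminal value $\frac\gamma2(Q^{a^\star}_T-\overline Q_0)$ are square-integrable and already known. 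Its unique solution is $Y_t=\mathbb E[\frac\gamma2(Q^{a^\star}_T-\overline Q_0)+\int_t^T\frac\beta2(Q^{a^\star}_s-\overline Q_0)\,ds\mid\mathcal F_t]$; because the filtration is general, martingale representation produces a Brownian part $Z$, a jump part $\Gamma$, and an orthogonal part $H\in\mathcal M^{2,\perp}$, yielding existence and uniqueness in $\mathcal S^2\times\mathcal H^2\times\mathcal H^2_\nu\times\mathcal M^{2,\perp}$.

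Next I compute the Gâteaux derivative of $J(0,\overline Q_0,\cdot)$ at $a^\star$. Exploiting that the state is affine in the control, for $\eta\in\mathcal A$ one has $Q^{a^\star+\varepsilon\eta}=Q^{a^\star}+\varepsilon\theta$ with $\theta_t=-\int_0^t\eta_s\,ds$ (the noise terms cancel since $\rho,\Xi$ are control-independent), so $\theta$ is continuous, of finite variation, and $\theta\in\mathcal S^2$. Differentiating the quadratic cost and justifying the interchange with the expectation by dominated convergence gives
\begin{equation*}
\frac{d}{d\varepsilon}\Big|_{\varepsilon=0}J(0,\overline Q_0,a^\star+\varepsilon\eta)=\mathbb E\left[\int_0^T\alpha a^\star_t\eta_t\,dt+\int_0^T\beta(Q^{a^\star}_t-\overline Q_0)\theta_t\,dt+\gamma(Q^{a^\star}_T-\overline Q_0)\theta_T\right].
\end{equation*}
Applying Itô's product formula to $Y_t\theta_t$ and using that $\theta$ is continuous of finite variation (so $[Y,\theta]=0$, and the stochastic integrals against $Z\,dW$, $\Gamma\,\widetilde N$ and $dH$ are true martingales of zero mean), the terminal and running $Q$-terms collapse against the adjoint dynamics and the derivative reduces to $\mathbb E[\int_0^T(\alpha a^\star_t-2Y_t)\eta_t\,dt]$. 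Optimality of $a^\star$ forces this to vanish for every $\eta\in\mathcal A$, whence $\frac\alpha2 a^\star_t-Y_t=0$, which is \eqref{optimality_condition}.

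For sufficiency I use convexity. Given a tuple satisfying \eqref{FBSDEJ_lemma}--\eqref{optimality_condition} and any $a\in\mathcal A$, write $\Delta=a-a^\star$ and $\delta Q_t=-\int_0^t\Delta_s\,ds$. The elementary inequality $x^2-y^2\ge 2y(x-y)$ applied to each quadratic term gives
\begin{equation*}
J(0,\overline Q_0,a)-J(0,\overline Q_0,a^\star)\ge \mathbb E\left[\int_0^T\alpha a^\star_t\Delta_t\,dt+\int_0^T\beta(Q^{a^\star}_t-\overline Q_0)\delta Q_t\,dt+\gamma(Q^{a^\star}_T-\overline Q_0)\delta Q_T\right],
\end{equation*}
and the right-hand side is exactly the derivative expression above with $\eta=\Delta$, which equals $\mathbb E[\int_0^T(\alpha a^\star_t-2Y_t)\Delta_t\,dt]=0$ by \eqref{optimality_condition}. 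Hence $J(0,\overline Q_0,a)\ge J(0,\overline Q_0,a^\star)$ for all $a$, i.e.\ $a^\star$ is optimal.

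I expect the only genuinely delicate point to be the careful handling of the general filtration: one must invoke martingale representation with the extra orthogonal component $H\in\mathcal M^{2,\perp}$ and verify that $\int_0^\cdot\theta_t\,dH_t$ is a true martingale, so that $H$ contributes nothing to the duality identity. The remaining steps (membership of $Q^{a^\star}$ and $\theta$ in $\mathcal S^2$, and the dominated-convergence justification of the Gâteaux derivative) are routine given the $\mathcal H^2$ integrability of the admissible controls.
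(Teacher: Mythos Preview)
Your proposal is correct and follows essentially the same route as the paper: perturb the control, apply It\^o's product formula to $Y_t\theta_t$ (the paper writes this as $Y_t^{a^\star}\frac{Q_t^\varepsilon-Q_t^{a^\star}}{\varepsilon}$) to obtain the duality identity, and deduce the first-order condition; for the converse, your explicit use of $x^2-y^2\ge 2y(x-y)$ is just a spelled-out version of the paper's one-line appeal to strict convexity plus zero G\^ateaux derivative. Your treatment is in fact slightly more careful than the paper's in that you explicitly justify existence and uniqueness of the adjoint via conditional expectation and martingale representation with the orthogonal component $H$, whereas the paper simply posits the BSDE.
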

\begin{proof}
Let $a^\star \in \mathcal{A}$ be the optimal control. Define $a^{\varepsilon}:=a^\star + \varepsilon a$ for $a \in \mathcal{A}$ and $\varepsilon>0$. Note that $a^{\varepsilon} \in \mathcal{A}$. Let $Q^\varepsilon \in \mathcal{S}^2$ be the unique strong solution of the following controlled SDE:
\begin{equation*}
    dQ^{\varepsilon}_t=\left(-a^\star_t-\e a_t - \frac{P_t}{2\a}+\kappa_t\right)dt + \rho_t dW_t+ \int_{\mathbb{R}^{\star}} \Xi(t,e)\widetilde N(dt,de), \,\, Q^{\e}_0=\overline{Q}_0.
\end{equation*}
Note that 
\begin{equation*}
    d\left(\frac{Q_t^{\varepsilon}-Q_t^{a^\star}}{\varepsilon}\right)=-a_tdt
\end{equation*}
Applying the Itô's formula to $Y_t^{a^\star}\frac{Q_t^{\varepsilon}-Q_t^{a^\star}}{\varepsilon}$ and taking the expectation, we get:
\begin{multline} \label{ito}
    0=\E\int_0^T Y_t^{a^\star}a_tdt + \E\int_0^T\frac{Q^{\e}_t-Q^{a^\star}_t}{\e}\left(\frac{\b}{2}Q^{a^\star}_t-\frac{\b}{2}\overline Q_0\right)dt  \\+ \E\left[\left(\frac{\g}{2}Q_T^{a^\star} - \frac{\g}{2}\overline Q_0\right)\frac{Q^{\e}_T-Q^{a^\star}_T}{\e}\right].
\end{multline}
 Developing the squares and using \eqref{ito} yields
\begin{align}\label{derivative}
    &\frac{J(0,\overline Q_0,a^{\e})- J(0,\overline Q_0,a^\star)}{\e}=  
    2 \E\left[ \int_0^T \left(\frac{\a}{2} a^\star_t a_t +\frac{\b}{2}Q^{a^\star}_t\frac{Q^{\e}_t-Q^{a^\star}_t}{\e} - \frac{\b}{2}\overline Q_0\frac{Q^{\e}_t-Q^{a^\star}_t}{\e} \right) dt \right. \nonumber \\&\hspace*{4.7cm} \left.+  \frac{\g}{2}Q^{a^\star}_T\frac{Q^{\e}_T-Q^{a^\star}_T}{\e}-\frac{\g}{2}\overline Q_0\frac{Q^{\e}_T-Q^{a^\star}_T}{\e}\right] 
    \nonumber \\& \hspace*{4.7cm}+ \e \E\left[ \int_0^T \frac{\a}{2}a_t^2 + \frac{\b}{2}\left( \frac{Q^{\e}_t-Q^{a^\star}_t}{\e} \right)^2 dt +\frac{\gamma}{2}\left(\frac{Q^{\e}_T-Q^{a^\star}_T}{\e}\right)^2 \right] \nonumber \\
    &=2\E\left[\int_0^Ta_t\left(\frac{\a}{2} a^\star_t  -Y_t^{a^\star}\right)dt\right] +    \e \E\left[\int_0^T\frac{\a}{2}a^2_t + \frac{\b}{2}\left(\frac{Q^{\e}_t-Q^{a^\star}_t}{\e}\right)^2dt + \frac{\g}{2}\left(\frac{Q^{\e}_T-Q^{a^\star}_T}{\e}\right)^2\right]
\end{align}
By taking the limit $\varepsilon \to 0 $ in the above relation and by optimality of $a^\star$, it follows that  \begin{equation*}
    \E\left[\int_0^Ta_t\left(\frac{\a}{2} a^\star_t  -Y_t^{a^\star}\right)dt\right] = 0
\end{equation*}
Since the above equality holds for any control $a \in \mathcal{A}$, we deduce that
\begin{equation}\label{lim2} 
\frac{\a}{2} a^\star_t  -Y_t^{a^\star} = 0,  \,  \forall t \in [0,T] \ \text{a.s.}
\end{equation}
Conversely, assume that there exists $(a^\star, Q^{a^\star}, Y^{a^\star}, Z^{a^\star},\Gamma^{a^\star},H^{a^\star}) \in \mathcal{A} \times \mathcal{S}^2 \times \mathcal{S}^2 \times \mathcal{H}^2 \times \mathcal{H}^2_\nu\times \mathcal M^\perp$ satisfying \eqref{FBSDEJ_lemma}-\eqref{optimality_condition}. From $\eqref{derivative}$, we get that the Gâteaux derivative of $J(0, \overline{Q}_0,\cdot)$ is $0$ at $a^\star$. From the strict convexity of the functional $J(0,\overline{Q}_0,\cdot)$, we deduce  that $a^\star$ is the optimal control. 

\end{proof}

Using the lemma above, we now turn to the proof of the main theorem.
\begin{proof}[Proof of Theorem \ref{main}]
Inspired by \cite{kohlmann_global_2002}, we look for the optimal control $a^\star$ in the following feedback form:
 \begin{align} \label{optimal_control}
 a^\star_t&=\frac{2}{\a}\left(y_tQ_t^{a^\star} + \psi_t\right)  \\  
     dQ_t^{a^\star}&=\left(-a^\star_t-\frac{P_t}{2\a}+\kappa_t\right)dt+ \rho_tdW_t + \int_{\mathbb{R}^{\star}} \Xi(t,e)\widetilde N(dt,de).
 \end{align}
Here, $y_t$ is the unique solution of the following Riccati ODE:
    \begin{align} \label{riccati} 
    &dy_t=-\left(\frac{\b}{2} - \frac{2}{\a}y_t^2\right)dt; \quad y_T=\frac{\g}{2},
    \end{align}
which is given explicitly by
  \begin{equation*}
    y_t=\frac{\sqrt{\a \b}}{2} \frac{1-\frac{\sqrt{\a \b}-\g}{\sqrt{\a \b}+\g}e^{-2\omega\left(T-t\right)}}{1+\frac{\sqrt{\a \b}-\g}{\sqrt{\a \b}+\g}e^{-2\omega\left(T-t\right)}},
 \end{equation*}
and $\psi$ is  the first component of the process $(\psi_t, \phi_t, \xi_t,H_t) \in \mathcal{S}^2 \times \mathcal{H}^2 \times \mathcal{H}^2_\nu\times \mathcal M^\perp$,  which represents  the unique solution of the following linear Backward Stochastic Differential Equation with Jumps (BSDEJ):
\begin{align*}
      d\psi_t &=\left[\frac{2}{\a}y_t\psi_t +\left(\frac{P_t}{2\a}-\kappa_t\right)y_t + \frac{\b}{2}\overline Q_0 \right]dt + \phi_t dW_t+\int_{\mathbb{R}^{\star}} \xi_t(e)\widetilde N(dt,de) + H_t, \\  \psi_T&=-\frac{\g}{2}\overline Q_0.
\end{align*}
Observe that $(\psi_t)$ admits the following representation
 \begin{equation*}
      \psi_t=-\frac{\g}{2}\overline Q_0e^{\frac{2}{\a}A_t}-\E\left[ \int_t^T e^{-\frac{2}{\a}A_{t,s}}\left(\left[\frac{P_s}{2\a}-\kappa_s\right]y_s+\frac{\b}{2}\overline Q_0\right)ds | \mathcal{F}_t\right],
 \end{equation*}
 where 
 \begin{multline*}
     A_t:=-\int_t^T y_s ds =
     \\ \frac{\sqrt{\a\b}}{2}\left(t-T-\sqrt{\frac{\a}{\b}}\left[\log\left(1+\frac{\sqrt{\a \b}-\g}{\sqrt{\a \b}+\g}e^{-2\omega(T-t)}\right)-\log\left(1+\frac{\sqrt{\a \b}-\g}{\sqrt{\a \b}+\g}\right)\right]\right)
 \end{multline*}
  and $A_{t,s}=A_s-A_t$.

In view of Lemma \ref{lemma}, we need to prove that the process $Y_t^{a^\star}:=y_tQ^{a^\star}_t + \psi_t$ satisfies FBSDEJ \eqref{FBSDEJ_lemma} to conclude that $a^\star$ is the optimal control. The terminal condition holds by construction. Applying Itô formula yields
 \begin{align*}
     dY_t^{a^\star}&= y_tdQ_t^{a^\star}+\frac{dy_t}{dt}  Q_t^{a^\star} dt +d\psi_t \\ 
     &=\left[-a^\star_ty_t+\frac{2}{\a}y_t\psi_t+\frac{\b}{2}\overline Q_0 +\frac{dy_t}{dt} Q_t^{a^\star}\right]dt +Z_tdW_t\\
     &+\int_{\mathbb{R}^{\star}} \Gamma_t(e)\widetilde N(dt,de)+H_t, 
 \end{align*}
 where $Z_t:=\phi^1_t+y_t \rho_t$ and $\Gamma_t(e):=y_t\Xi_t(e)+\xi_t(e)$.

 Using equations \eqref{optimal_control} and \eqref{riccati},
 \begin{equation*}
     -a^\star_ty_t+\frac{2}{\a}y_t\psi_t +\frac{dy_t}{dt} Q_t^{a^\star}=-\frac{\b}{2}Q_t^{a^\star},
 \end{equation*}
which proves that $(Y_t^{a^\star},Z_t,\Gamma_t,H_t)_t$ is the unique solution of the FBSDEJ \eqref{FBSDEJ_lemma}. Therefore, by Lemma \ref{lemma}, we conclude that  $a^\star$ is the optimal control.

Now, we go back to the initial control problem \eqref{Optimal}, and by using \eqref{change_var}, we  derive 
\begin{equation}
    q^\star_t=\frac{2}{\a}\left\{y_tQ^{q^\star}_t +\psi_t \right\} + \frac{P_t}{2\a}.
\end{equation}
with 
\begin{equation}
    dQ^{q^\star}_t=(-q_t^\star+\kappa_t)dt + \rho_t dW_t +\int_{\mathbb{R}^{\star}} \Xi(t,e)\widetilde N(dt,de)
\end{equation}
We now proceed to compute the explicit representation of the optimal control process $(q_t^\star)$. To simplify notation, set $u:=\frac{\sqrt{\a \b}-\g}{\sqrt{\a \b}+\g}$. We then obtain, from the representations for $y_t$ ad $\psi_t$ given above,
\begin{multline}
    q^\star_t=\frac{2}{\a}\left\{\frac{\sqrt{\a \b}}{2} \frac{1-ue^{-2\omega(T-t)}}{1+ue^{-2\omega(T-t)}}Q^{q^\star}_t -\frac{\g}{2}\overline Q_0e^{\frac{2}{\a}A_t}- \right. \\ \left.\E\left[ \int_t^T e^{-\frac{2}{\a}A_{t,s}}\left(\left[\frac{P_s}{2\a}-\kappa_s\right]y_s+\frac{\b}{2}\overline Q_0\right)ds | \mathcal{F}_t\right] \right\} + \frac{P_t}{2\a} 
\end{multline}
Since $$-\frac{2}{\a}A_{t,s}=-\omega(s-t)+\log\left(\frac{1+ue^{-2\omega(T-s)}}{1+ue^{-2\omega(T-t)}}\right),$$
we get 

\begin{equation*}
    \int_t^T  e^{-\frac{2}{\a}A_{t,s}} ds=  \sqrt{\frac{\a}{\b}} \frac{1-(1-u)e^{-\omega(T-t)}-ue^{-2\omega(T-t)}}{1+ue^{-2\omega(T-t)}}.
\end{equation*}
Moreover, 

\begin{equation*}
    e^{\frac{2}{\a}A_t}=e^{-\omega(T-t)}\frac{1+u}{1+ue^{-2\omega(T-t)}}
\end{equation*}
Hence, 
\begin{multline}
    q^\star_t=\omega \frac{1-ue^{-2\omega\left(T-t\right)}}{1+ue^{-2\omega\left(T-t\right)}}(Q^{q^\star}_t-\overline Q_0) + \frac{\omega(1-u)-\frac{\g}{\a}(1+u)}{1+ue^{-2\omega(T-t)}}e^{-\omega(T-t)}\overline Q_0 -  \\ \frac{2}{\a} \E\left[ \int_t^T e^{-\frac{2}{\a}A_{t,s}}\left(\frac{P_s}{2\a}-\kappa_s\right)y_sds | \mathcal{F}_t\right] + \frac{P_t}{2\a}. 
\end{multline}

By performing some simple computations, one can remark that $\omega(1-u)-\frac{\g}{\a}(1+u)=0$, leading to
\begin{equation}
     q^\star_t=\omega \frac{1-ue^{-2\omega\left(T-t\right)}}{1+ue^{-2\omega\left(T-t\right)}}\left(Q^{q^\star}_t-\overline Q_0\right) -\frac{2}{\a} \E\left[ \int_t^T e^{-\frac{2}{\a}A_{t,s}}\left(\frac{P_s}{2\a}-\kappa_s\right)y_sds | \mathcal{F}_t\right]  + \frac{P_t}{2\a}
\end{equation}

Set : 
\begin{equation*}
    f(t,T):=\frac{1-ue^{-2\omega\left(T-t\right)}}{1+ue^{-2\omega\left(T-t\right)}}.
\end{equation*}
\begin{equation*}
    f_1(t,s,T):=\frac{2}{\alpha}e^{-\frac{2}{\a}A_{t,s}}y_s=\omega\frac{e^{-\omega(s-t)}-ue^{-\omega(2T-s-t)}}{1+ue^{-2\omega(T-t)}}.
\end{equation*} 

Therefore, $q^\star$ admits the following closed-loop representation :
\begin{equation}
    q^\star_t=\omega f(t,T)\left(Q^{q^\star}_t-\overline Q_0\right) - \E\left[ \int_t^T f_1(t,s,T)\left(\frac{P_s}{2\a}-\kappa_s\right)ds | \mathcal{F}_t\right]  + \frac{P_t}{2\a}.
\end{equation}
\end{proof}

\section{Technical proofs for the deterministic case}
\subsection{Proof of Proposition 3.1}
\label{proof3.1}
In this case, the optimal strategy of a storage agent can be rewritten as : 

\begin{equation}
    q_t=\omega f(t,T)Q^q_t -  \int_t^T f_1(t,s,T)\frac{P_s}{2\a}ds   + \frac{P_t}{2\a}
\end{equation}
This closed-loop form is a first-order ordinary differential equation since $q_t=-\frac{dQ_t}{dt}$. Its solution is given by :
\begin{equation*}
    Q^q_t=e^{-G(t)}\left(\int_0^te^{G(s)}\int_s^Tf_1(s,r,T)\frac{P_r}{2\a}drds-\int_0^te^{G(s)}\frac{P_s}{2\a}ds\right),
\end{equation*}
with $G(t)=\omega t- \log(1+ue^{-2\o (T-t)})$.
Hence, by using Fubini's theorem, 

\begin{multline} \label{determinproof1}
    Q^q_t=e^{-\o t}(1+ue^{-2\o (T-t)}) \left\{\int_0^T\int_0^re^{G(s)}f_1(s,r,T)\frac{P_r}{2\a}dsdr \right.\\ \left. - \int_t^T\int_t^re^{G(s)}f_1(s,r,T)\frac{P_r}{2\a}drds - \int_0^te^{G(s)}\frac{P_s}{2\a}ds\right\}
\end{multline}

We compute the different terms appearing in the above expression separately.
For the first one, we get
\begin{equation}\label{eq1}
    \int_0^re^{G(s)}f_1(s,r,T)ds=\int_0^r \frac{\o e^{\o s}(e^{-\o(r-s)}-ue^{-\o(2T-r-s)})
    }{(1+ue^{-2\o(T-s)})^2}ds
\end{equation}
\begin{align*}
    &=e^{-\o r}\int_0^r \frac{\o e^{2\o s}}{(1+ue^{-2\o(T-s)})^2}ds - e^{\o r} \int_0^r \frac{\o u e^{-2\o (T-s)}}{(1+ue^{-2\o(T-s)})^2}ds
  \\  &=-\frac{1}{2}\frac{\frac{e^{\o (2T-r)}}{u}-e^{\o r}}{1+ue^{-2\o(T-r)}} -\frac{1}{2}\frac{e^{\o r}-\frac{e^{\o (2T-r)}}{u}}{1+ue^{-2\o T}}.
\end{align*}
Similarly, for the second term, we obtain:
\begin{equation}\label{eq2}
    \int_t^re^{G(s)}f_1(s,r,T)ds = -\frac{1}{2}\frac{\frac{e^{\o (2T-r)}}{u}-e^{\o r}}{1+ue^{-2\o(T-r)}} -\frac{1}{2}\frac{e^{\o r}-\frac{e^{\o (2T-r)}}{u}}{1+ue^{-2\o (T-t)}}
\end{equation}
Hence, by replacing \eqref{eq1} and \eqref{eq2}  in (\ref{determinproof1}), we derive:


\begin{align}\label{eq4}
    Q^q_t&=e^{-\o t}(1+ue^{-2\o (T-t)})\left\{-\int_0^t \frac{e^{\o r}}{1+ue^{-2\o (T-r)}}\frac{P_r}{2\a}dr\right. \nonumber \\ &\quad\left. -\int_0^T \frac{1}{2}\frac{\frac{e^{\o (2T-r)}}{u}-e^{\o r}}{1+ue^{-2\o(T-r)}}\frac{P_r}{2\a}dr -\int_0^T \frac{1}{2}\frac{e^{\o r}-\frac{e^{\o (2T-r)}}{u}}{1+ue^{-2\o T}} \frac{P_r}{2\a}dr \right. \nonumber \\ &\quad\left.+ \int_t^T \frac{1}{2}\frac{\frac{e^{\o (2T-r)}}{u}-e^{\o r}}{1+ue^{-2\o(T-r)}}\frac{P_r}{2\a} dr \right\} 
    +\int_t^T \frac{1}{2}\left(e^{\o (r-t)}-\frac{e^{\o (2T-r-t)}}{u}\right)\frac{P_r}{2\a}dr.
\end{align}
Moreover,
\begin{multline}\label{eq3}
    -e^{-\o t}(1+ue^{-2\o (T-t)})\int_0^T \frac{1}{2}\frac{e^{\o r}-\frac{e^{\o (2T-r)}}{u}}{1+ue^{-2\o T}} \frac{P_r}{2\a}dr= \\\frac{1}{2}\int^T_0\left\{-e^{\o(T-t)}\frac{e^{\o(r-T)}-\frac{e^{\o(T-r)}}{u}}{1+ue^{-2\o T}}+ e^{-\o(T-t)} \frac{e^{\o(T-r)}-ue^{\o(r-T)}}{1+ue^{-2\o T}} \right\}\frac{P_r}{2\a} dr
\end{multline}
Recall that 
\begin{align*}
    c_1(P)&= \frac{\gamma\int_0^T \cosh(\omega(T-s)) \frac{P_s}{\alpha} ds+ \beta\int_0^T \frac{\sinh(\omega(T-s))}{\omega} \frac{P_s}{\alpha} ds}{\gamma  \omega\sinh (\omega T)+\beta \cosh (\omega T)} \\
    u&=\frac{\sqrt{\a\b}-\g}{\sqrt{\a\b}+\g}=\frac{\b-\g \o}{\g \o +\b}.
\end{align*}
Hence, the expression \eqref{eq3} is equal to :
\begin{equation}\label{determinproof2}
    -\frac{1}{2}\int^T_0e^{\o(T-t)}\frac{e^{\o(r-T)}-\frac{e^{\o(T-r)}}{u}}{1+ue^{-2\o T}}\frac{P_r}{2\a}dr + {\frac{1}{2}e^{\o t}\o c_1(P)}.
\end{equation}

Moreover, 
\begin{equation*}
    -\frac{1}{2}\int^T_0e^{\o(T-t)}\frac{e^{\o(r-T)}-\frac{e^{\o(T-r)}}{u}}{1+ue^{-2\o T}}\frac{P_r}{2\a}dr + \int_t^T \frac{1}{2}\left(e^{\o {(r-t)}}-\frac{e^{\o (2T{-r-t})}}{u}\right)\frac{P_r}{2\a}dr 
\end{equation*}

\begin{align}
    &=-\frac{1}{2}\int_0^t\left(e^{\o (r-t)}-\frac{e^{\o (2T-r-t)}}{u}\right)\frac{P_r}{2\a}dr + \frac{1}{2} \int_0^T \frac{ue^{\o(-2T+r-t)}-e^{-\o(r+t)}}{1+ue^{-2\o T}}\frac{P_r}{2\a}dr \nonumber
 \\ \label{determinproof3}
    &=-\frac{1}{2}\int_0^t\left(e^{\o (r-t)}-\frac{e^{\o (2T-r-t)}}{u}\right)\frac{P_r}{2\a}dr - \frac{1}{2}e^{-\o t}\o c_1(P)
\end{align}
By combining (\ref{determinproof2}) with (\ref{determinproof3}) and by using \eqref{eq4}, we get that :
\begin{multline*}
    Q^q_t=\o \sinh(\o t)c_1 +\\ \int_0^t \left\{-\frac{e^{-\o t}\left(1+ue^{-2\o (T-t)}\right)}{2}\frac{\frac{e^{\o (2T-r)}}{u}+e^{\o r}}{1+ue^{-2\o(T-r)}}-\frac{1}{2}\left(e^{\o (r-t)}-\frac{e^{\o (2T-r-t)}}{u}\right)\right\}\frac{P_r}{2\a}dr 
\end{multline*}
\begin{equation*}
    =\int_0^t \frac{-e^{\o(t-r)}-e^{\o(r-t)}-ue^{\o(-2T+3r-t)}-ue^{\o(-2T+r+t)}}{2(1+ue^{-2\o(T-r)})} \frac{P_r}{2\a}dr + \o \sinh(\o t)c_1(P).
\end{equation*}

Note that $-(1+ue^{-2\o(T-r)})(e^{\o(r-t)}+e^{\o(t-r)})=-e^{\o(t-r)}-e^{\o(r-t)}-ue^{\o(-2T+3r-t)}-ue^{\o(-2T+r+t)}$.

Hence, we can conclude :

\begin{equation}
    Q^q_t=\o \sinh(\o t)c_1(P) -\int_0^t \cosh(\o(t-r)) \frac{P_r}{2\a}dr.
\end{equation}

\subsection{Proof of Theorem \ref{toy.thm}}
\label{prooftoy.sec}
The equilibrium equation writes:
\begin{align}\label{equation}
D_t - C_0 - C P_t = -c_1(P) \omega^2\cosh\left(\omega t\right) + \frac{P_t}{2\alpha}+\int_0^t \omega\sinh(\omega(t-s)) \frac{P_s}{2\alpha} ds
\end{align}
We first solve it for fixed $c_1(P)$. Introduce the function
$$
G^P(t) = \int_0^t \sinh(\omega(t-s)) P_s ds
$$
Then,
$$
\dot G^P(t) = \int_0^t \omega\cosh(\omega(t-s)) P_s ds\quad \text{and}\quad \ddot G^{P}(t) = \omega P_t + \omega^2 G(t)
$$
Therefore, the equilibrium equation is a 2nd order inhomogenous ordinary differential equation for $G^P$:
$$
 D_t - C_0 + c_1(P) \omega^2\cosh\left(\omega t\right) = \frac{(C+1/2\alpha)}{\omega}(\ddot G^{P}-\omega^2 G^P(t)) + \frac{\omega}{2\alpha} G^P(t),
$$
or in other words
$$
g(t):=\frac{\omega( D_t - C_0 +  c_1(P) \omega^2 \cosh(\omega t))}{C + 1/2\alpha}  =   \ddot G^{P}(t) - \frac{C}{C + 1/2\alpha}  \omega^2 G^P(t)
$$
With initial conditions $G^P(0) = \dot G^P(0) = 0$, the solution takes the form
$$
G^P(t) = \frac{1}{\tilde\omega}\int_0^t \sinh\left(\tilde\omega(t-s)\right) g(s) ds.
$$
Substituting $g(t)$ in the above formula, we get:
\begin{multline*}
  G^P(t) = \frac{\omega}{\tilde\omega(C+1/2\alpha)}\int_0^t\left( D_s - C_0\right) \sinh\left(\tilde\omega(t-s)\right) ds \\ +  c_1(P)\frac{\omega^3}{\tilde\omega(C+1/2\alpha)}\int_0^t \cosh(\omega s) \sinh(\tilde\omega(t-s)) ds.
\end{multline*}
From expression \eqref{represc1}, 
$$
c_1(P):= \frac{1}{2\alpha\omega}\frac{\gamma\dot G^P(T)+ \beta G^P(T)}{\gamma  \omega\sinh\left(\omega T\right)+\beta \cosh\left(\omega T\right)}.
$$
Observe that 
$$
G^P(T) = A + c_1(P)B\quad \text{and}\quad \dot G^P(T) = A' + c_1(P) B'.
$$
Substituting this into the above expression for $c_1(P)$, we obtain a linear equation for this quantity, which yields the formula \eqref{c1}. The function $G^P(t)$ with $P$ equal to the equilibrium price is denoted by $X(t)$, while the value of $c_1(P)$ is denoted by $\Tilde{c}_1$.



Finally, from the equilibrium equation $\eqref{equation}$, we derive
{$$
P_t = \frac{ D_t - C_0 - \frac{\omega}{2\alpha} X(t) + \tilde c_1 \omega^2 \cosh(\omega t)}{C+ \frac{1}{2\alpha}}.
$$}
The withdrawal rate is given by

\begin{align*}
q_t &=  D_t - C_0 - C P_t 
\\&={\frac{( D_t - C_0)}{1+2\alpha C} + \frac{\alpha C}{1+2\alpha C} \left\{\frac{\omega}{\alpha}X(t) - \tilde c_1 \omega^2 \cosh\left(\omega t\right)\right\},}
\end{align*}
and the state of charge is
\begin{align*}
    Q_t^q &= -\int_0^t q_s ds \\
    &= -\int_0^t \frac{( D_s - C_0)ds}{1+2\alpha C} - \frac{\omega C}{1+2\alpha C} \int_0^t X(s)ds+ \frac{\tilde c_1\beta C}{\omega(1+2\alpha C)} \sinh\left(\omega t\right),
\end{align*}
where
\begin{align*}
\int_0^t X(s) ds &= \frac{\omega}{\tilde\omega(C+1/2\alpha)}\int_0^t \int_0^s\left( D_r - C_0\right) \sinh(\tilde\omega(s-r)) drds \\ &+  c_1\frac{\omega^3}{\tilde\omega(C+1/2\alpha)}\int_0^t  \int_0^s \cosh(\omega r) \sinh(\tilde\omega(s-r)) drds\\
 & = \frac{1}{\omega C}\int_0^t\left( D_s - C_0 + \tilde c_1 \omega^2 \cosh\left(\omega s\right)\right)\left\{\cosh\left(\tilde\omega(t-s)\right))-1 \right\}ds.
\end{align*}

\section{Calibration of parameters for the stochastic case} \label{calibration}
Suppose we have a time series of observed hourly data $(\Delta_t) \in \R^{24l}$  over $l$ days. Define $\overline \Delta^h$:
\begin{equation*}
    \overline \Delta_h=\frac{1}{l}\sum_{j \equiv h \text{ mod }24} \Delta_j.
\end{equation*}

Define now $\widetilde \Delta := \Delta- \overline \Delta$. Define $\widetilde \Delta^{OE}$ which follows an Ornstein Uhlenbeck dynamic: 
\begin{equation*}
    \widetilde \Delta^{OE}_t= \widetilde \Delta^{OE}_0 - \int_0^t \theta^{\Delta}(\widetilde \Delta^{OE}_t-\mu^{\Delta})dt + \int_0^t \s^{\Delta}dW^{\Delta}_t 
\end{equation*}
with $W^{\Delta}$ a Brownian motion. We estimate $(\theta^{\Delta},\mu^{\Delta},\sigma^{\Delta})$ through MLE on $\widetilde \Delta$. Moreover, define the following auxiliary process $(\overline \Delta^{OE}_t)$ on $[0,24 \text h ]$:

\begin{equation*}
   \forall h=1,.., 23, \forall t\in [h,h+1[, \quad \overline \Delta^{OE}_t := \overline \Delta_h.
\end{equation*}
Finally, the OE-estimator of $\Delta$ is defined for all $t$ by:
\begin{equation*}
    \Delta^{OE}_t:= \overline \Delta^{OE}_t + \widetilde \Delta^{OE}_t.
\end{equation*}
\section{Scenario coefficients}
\label{RTE_coef}
\begin{table}[H] \label{table_renewable}
\centering
\begin{tabular}{|c|c|c|c|c|c|}
\hline
\textbf{Year} & \textbf{N03} & \textbf{N2} & \textbf{N1} & \textbf{M23} & \textbf{M1} \\ \hline
2019           & 1            & 1           & 1           & 1            & 1           \\ \hline
2030           & 1.92         & 1.92        & 1.92        & 1.92         & 1.92        \\ \hline
2040           & 2.58         & 2.84        & 3.35        & 3.83         & 4.29        \\ \hline
2050           & 3.16         & 3.95        & 4.77        & 5.50         & 6.62        \\ \hline
\end{tabular}
\caption{Multiplication coefficients for renewable generation in the RTE scenarios compared with current generation.}
\end{table}

\begin{table}[H] \label{table_storage}
\centering
\begin{tabular}{|c|c|c|c|c|c|}
\hline
\textbf{Year} & \textbf{N03} & \textbf{N2} & \textbf{N1} & \textbf{M23} & \textbf{M1} \\ \hline
2019           & 5            & 5           & 5           & 5            & 5           \\ \hline
2030           & 5.5          & 5.5         & 5.5         & 5.5          & 5.5         \\ \hline
2040           & 7.1          & 8           & 8.1         & 11.6         & 15.7        \\ \hline
2050           & 9            & 10.5        & 17.2        & 21.2         & 29.1        \\ \hline
\end{tabular}
\caption{Installed storage capacity in the electricity market.}
\end{table}

\begin{table}[H]
\centering

\label{table_demand}
\begin{tabular}{|c|c|c|c|c|}
\hline
\textbf{Year} & \textbf{2019} & \textbf{2030} & \textbf{2040} & \textbf{2050}\\ \hline
Energy Demand  & 1             & 1.07          & 1.19     &  1.36     \\ \hline
\end{tabular}
\caption{Multiplication coefficients for the energy demand in the RTE scenarios compared with current demand.}
\end{table}


 \end{document}